\documentclass[conference]{IEEEtran}
%

\pagestyle{plain}

\ifCLASSINFOpdf
\else
\fi
\usepackage{multirow}
\usepackage{siunitx}
\usepackage[center]{caption}
\usepackage{hyperref}
\usepackage{url}
\usepackage[english]{babel}
\usepackage[utf8]{inputenc}
\usepackage{algorithm}
\usepackage[noend]{algpseudocode}
\usepackage{amssymb,amsmath,amsthm} 
\usepackage{subcaption}
\usepackage{graphicx}
\usepackage{xspace}
\usepackage{comment}
\usepackage{xcolor}
\usepackage{thmtools, thm-restate}
\usepackage{mdwlist}
\usepackage{colortbl}
\usepackage{float}
\usepackage[export]{adjustbox}

\usepackage{nicefrac}
\usepackage{cleveref}

\makeatletter
\def\thmheadbrackets#1#2#3{%
  \thmname{#1}\thmnumber{\@ifnotempty{#1}{ }\@upn{#2}}%
  \thmnote{ {\the\thm@notefont[#3]}}}
  \thm@headfont{\itshape} 
  \thm@notefont{} 
  \itshape
\makeatother

\newtheoremstyle{brakets}
  {}
  {}
  {\itshape}
  {}
  {\bfseries}
  {.}
  { }
  {\thmheadbrackets{#1}{#2}{#3}}

\theoremstyle{brakets}

\newcounter{def}
\newtheorem{definition}[def]{Definition}

\usepackage{mathtools} 
\usepackage{paralist}
\usepackage{bbm}

\newcommand{\obs}{m}
\newcommand{\Obs}{M}
\newcommand{\weight}{w}
\newcommand{\rate}{\lambda_s}

\newcommand{\pathnum}{x}
\newcommand{\clientavg}{C^{avg}_{client}}
\newcommand{\pois}{\text{Pois}}

\newcommand{\pathnumrand}{X}
\newcommand{\obsrand}{M}
\newcommand{\capacityset}{\mathcal{C}}
\newcommand{\ourmethodclosed}{\mathit{MLEFlow\text{-}CF}}
\newcommand{\ourmethodbrute}{\mathit{MLEFlow\text{-}Q}}
\newcommand{\ourmethod}{\ensuremath{\mathit{MLEFlow}}\xspace}
\newcommand{\dualmethod}{\ensuremath{\mathit{DiProber}}\xspace}
\newcommand{\dualmethodwhole}{\ensuremath{\mathit{DiProber\text{-}WH}}\xspace}
\newcommand{\dualmethodone}{\ensuremath{\mathit{DiProber\text{-}O}}\xspace}
\newcommand{\proptorflow}{\ensuremath{\mathit{TorFlow\text{-}P}}\xspace}
\newcommand{\sbws}{\mathit{sbws}\xspace}
\newcommand{\torflow}{\mathit{TorFlow}\xspace}
\newcommand{\mlflowexp}{\mathit{MF}}
\newcommand{\dualexpwhole}{\mathit{WH}}
\newcommand{\dualexpone}{\mathit{O}}
\newcommand{\torflowexp}{\mathit{TF}}

\newcommand{\nnnreals}{\mathbb{R}^n_{\geq 0}}

\newcommand{\Var}{\mathit{Var}}
\newcommand{\Cq}{\kappa}
\usepackage[shortlabels]{enumitem}

\catcode`,\active

\catcode`\,12

\usepackage{tikz}
\usepackage{amsmath}

\usepackage{filecontents}

\hyphenation{op-tical net-works semi-conduc-tor}

\begin{document}
%
\title{DiProber: Using Dual Probing to Estimate Tor Relay Capacities in Underloaded Networks}

\author{
\IEEEauthorblockN{Hussein Darir}
\IEEEauthorblockA{University of Illinois at\\ Urbana-Champaign\\
hdarir2@illinois.edu}
\and
\IEEEauthorblockN{Nikita Borisov}
\IEEEauthorblockA{University of Illinois at\\ Urbana-Champaign\\
nikita@illinois.edu}
\and
\IEEEauthorblockN{Geir Dullerud}
\IEEEauthorblockA{University of Illinois at\\ Urbana-Champaign\\
dullerud@illinois.edu
}}


%


\IEEEoverridecommandlockouts

\maketitle

\begin{abstract}
Tor is the most popular anonymous communication network. It has millions of daily users seeking privacy while browsing the internet. It has thousands of relays to route and anonymize the source and destinations of the users packets. To create a path, {\em Tor authorities} generate a probability distribution over relays based on the estimates of the capacities of the relays. An incoming user will then sample this probability distribution and choose three relays for their paths. The estimates are based on the bandwidths of observation probes the authority assigns to each relay in the network. Thus, in order to achieve better load balancing between users, accurate estimates are necessary. Unfortunately, the currently implemented estimation algorithm generate inaccurate estimates causing the network to be under utilized and its capacities unfairly distributed between the users paths. We propose $\dualmethod$, a new relay capacity estimation algorithm. The algorithm proposes a new measurement scheme in Tor consisting of two probes per relay and uses maximum likelihood to estimate their capacities. We show that the new technique works better in the case of under-utilized networks where users tend to  have very low demand on the Tor network. 

\end{abstract}


%

\section{Introduction}

Tor~\cite{tor} is the most popular anonymous communication network, with several million estimated daily users~\cite{tor-metrics-users}. It offers users a way to communicate online while preserving their identity and relationship to third parties. Tor operates by using a network of volunteer \emph{relays} to forward an encrypted version of users' traffic in order to obscure the source and/or the destination of network traffic. To ensure consistent performance, users' traffic needs to be load-balanced across the relays. The network capacities of the relays are quite heterogeneous, spanning many orders of magnitude; an accurate estimate of these capacities is an important input to the load-balancing process.

Initially Tor relied on relays' own measurements of their own capacities to generate estimates ; this, however, created the possibility of low-resource attacks on the Tor network~\cite{bauer+:wpes07}. This motivated the development of $\torflow$, a bandwidth monitoring system~\cite{torflow}. $\torflow$ uses external probes to monitor the performance of individual relays and uses this value to adjust the bandwidth value reported by the relay itself. The capacity estimates produced by $\torflow$, however, vary considerably over time and between different $\torflow$ instance, impacting the traffic allocation algorithm's ability to properly balance load across relays. A new estimation algorithm based on maximum likelihood estimation was also developed, $\ourmethod$~\cite{mleflowp}. While the estimation accuracy of this new algorithm showed a lot of promises, the  probabilistic model used to derive the estimator had many simplifying assumptions that are not necessarily true in practise. Notably, the model used assumed that users utilize all the bandwidth allocated to them while in practise users' demand fluctuate and is generally lower than the available bandwidth. Moreover, while the estimation algorithm used had significantly lower estimation error for exit relays, it still led to relatively high error for guard and middle relays. 

We propose a new maximum likelihood estimator based on the work of $\ourmethod$, where we relax the assumption aforementioned about clients usage of the available bandwidth. The developed algorithm proposes a new measurement scheme of two probes per relay and uses the results of both measurements to identify whether a relay is bottlenecked or not during each epoch. Then, depending on the case, the algorithm uses a distinct probabilistic model relating measurements to actual capacity and performs maximum likelihood estimation. We derive analytical bounds of convergence for the estimates. We then validate the results of our analysis in flow-based Python simulations, where we simulate both loaded and under-loaded networks and show the benefits of using the new measurements scheme.

\section{Path allocation in Tor}
\label{sec:true_tor_model}


The current Tor network consists of around $6000$ {\em relays}~\cite{tor-metrics-servers} that are used to forward user traffic. To create a connection, a user chooses a \emph{path} of three different relays to construct a circuit that forwards traffic in both directions. Only the user knows the entire path; the relays know only their predecessor and successor, obscuring the relationship between clients and destinations. The traffic is also encrypted / decrypted at each node to hide the correspondence between incoming and outgoing traffic from a network observer.

\begin{figure}
    \includegraphics[width=\columnwidth]{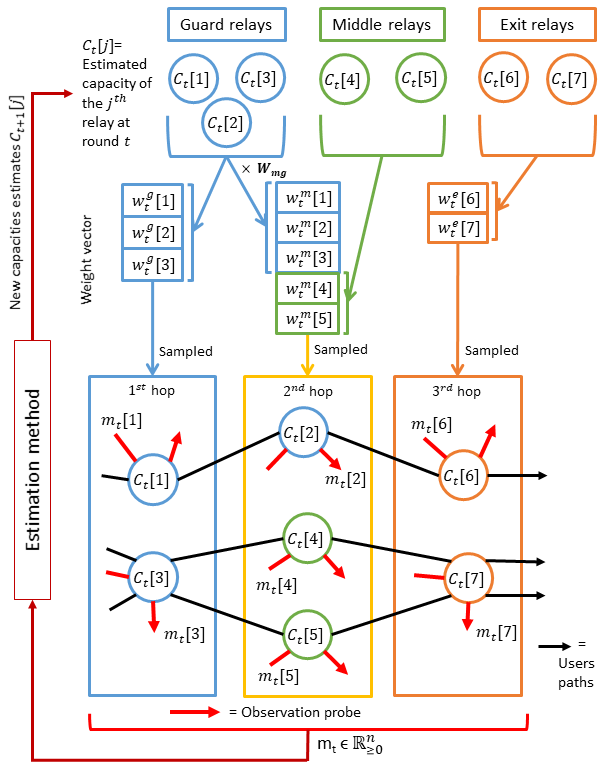}
    \caption{\small Tor relay selection and capacity estimation. Users select three relays from collections $G$, $G\cup M$, and $E$ using respective weight vectors $w_t^g, w_t^m$, and $w_t^e$, to 
    form a \emph{path} (black arrows). The bandwidth authority collects measurements $\obs_t$ of each relay (red arrows), and updates the capacity estimates $C_{t+1}$ in the next consensus document,
    which are then used to generate new weight vectors $w^x_{t+1}$.}    \label{fig:general_problem_definition}
\end{figure}

%
 
 Relays in Tor have heterogeneous capabilities and have network capacity\footnote{By ``capacity'' we refer to the smaller of upload and download bandwidth limit on the relay. This may be imposed by the ISP, the network configuration, or manually configured by the relay operator. In some cases, there may exist other bottlenecks on the path between two relays but a per-node bandwidth limit is a common and useful model of network capacity constraints.}
 sizes that differ by orders of magnitude (see Figure~\ref{fig:100_network_capacities}). Relays also have different capabilities and can be divided into three classes: \emph{exits}, which can be used in the last position of the path, \emph{guards}, which can be used in the first or second position, and \emph{middles} which can only be used in the second position~\cite{tor-dir-spec}. We denote the corresponding sets of relays by $E$, $G$, and $M$, respectively. To create a path, nodes are sampled from these sets with a probability proportional to their estimated capacity. For example, if we define $C[j]$ to be the estimated capacity of relay $j$, then the probability of choosing relay $j \in E$ as the last node in a path is $w^e[j] = C[j] / \left(\sum_{j' \in E} C[j']\right) $; likewise for guard nodes being chosen in the first position. The middle position can be chosen from both guard and middle nodes; to balance bandwidth among classes, guard node capacity is adjusted by a multiplier $W_{mg}$; i.e., a guard node $j \in G$ is chosen for the middle position with probability: 
 \[ w^m[j] = \frac{W_{mg} C[j]}{\sum_{j' \in G} W_{mg} C[j'] + \sum_{j' \in M} C[j']} \]
 The multiplier is computed as:
 \[ W_{mg} = \frac{\sum_{j' \in G} C[j'] - \sum_{j' \in M} C[j]}{2 \sum_{j' \in G} C[j']} \]
 
 This is a somewhat simplified presentation that describes the scenario where exit bandwidth is scarce and there is more guard bandwidth than middle bandwidth, as is the case in the actual Tor network. (See the Tor Directory Specification for more details on how other cases would be handled~\cite{tor-dir-spec}.) See Figure~\ref{fig:general_problem_definition} for a description of this process.

 It is easy to see that, in this scenario, if the estimated capacities are equal to the true relay capacities, which we will call $C^*[j]$, the expected number of paths using each  exit relay will be proportional to its bandwidth; likewise, the expected number of paths using each guard and middle node will be proportional to their bandwidth. Using $X[j]$ to denote the number of paths on relay $j$, we have:
 \begin{align*}
     E[X[j]]/C^*[j] =E[X[j']]/C^*[j'] \quad & \text{for $j, j' \in E$} \\
     E[X[j]]/C^*[j] = E[X[j']]/C^*[j'] \quad & \text{for $j, j' \in G \cup M$}
 \end{align*}
Thus, in expectation, each path would have the same bandwidth---$C^*[j] / E[X[j]]$ for $j \in E$. Our goal is therefore to estimate these capacities as accurately as possible.\footnote{Note that some research suggests allocation other than proportional to bandwidth results in better performance~\cite{snader-borisov:tdsc11,herbert2014optimising}; nevertheless, an accurate capacity estimate is still needed for these alternative  path allocation strategies.}

\subsection{Capacity Estimation}
\label{sec:capest}

Each relay estimates its own network capacity by computing the maximum sustained download and upload bandwidth over a 5-second period over the last 5 days. It reports this value (called the \emph{observed bandwidth}) to directory authorities, who then compile it across all relays and distribute the information to the clients in a \emph{consensus} document, published every hour. We will use $b_t[j]$ to refer to the observed bandwidth of relay $j$ in the consensus document published at time $t$. 

Using the observed bandwidth directly for load-balancing creates the opportunity for a low-resource attack on the Tor network~\cite{bauer+:wpes07}. In particular, a relay can publish a high observed bandwidth for itself, which will cause more clients to choose it, and create more chances for it to break users anonymity. This motivated the design of $\torflow$~\cite{torflow}, which used observations of actual relay performance to estimate capacities, rather than simply trusting the value reported by the relay itself. In $\torflow$, a \emph{bandwidth authority} creates probe circuits through each relay and downloads a file of a certain size, measuring the realized bandwidth.\footnote{Since Tor does not allow one-hop circuits, these circuits use two relays: the relay under measurement and a high-bandwidth relay.} We will call this the \emph{measured bandwidth}, $\obs_t[j]$. Note that in a perfectly load-balanced network, all of these observations should be equal, regardless of the chosen relay. The design of $\torflow$ uses a PID controller to attempt to bring these observations into balance.

More specifically, let $C^\torflowexp_t[j]$ be the  capacity of relay $j$, as estimated by $\torflow$, at time $t$. $\torflow$ computes an error term, $e_t[j]$ as the difference of the measured bandwidth and the average measured bandwidth, normalized by the average bandwidth:
\[ e_t[j] = (\obs_t[j] - \bar{\obs_t})/\bar{\obs_t} \]
\noindent where \[ \bar{\obs_t} = \sum_{j \in G \cup M \cup E} \obs_t[j] / \left(|G|+|M|+|E|\right) \]
It then computes the new estimate as~\cite[\S 3.1]{torflow-spec}:
\[ C^\torflowexp_{t+1}[j] = C^\torflowexp_t[j] \left(1 + K_p e_t[j] + K_i \int_{0}^t e_{t'}[j] dt' + K_d \frac{d e_t[j]}{dt} \right) \]
The constants $K_p, K_i,$ and $K_d$ control the proportional, integral, and derivative components of the PID controller. In the default configuration of $\torflow$, $K_i = K_d = 0$ and $K_p = 1$, so we will call this version of $\torflow$ \proptorflow. In this case the update equation can be simplified as:
\[ C^\torflowexp_{t+1}[j] = C^\torflowexp_t[j] \obs_t[j] / \bar{\obs_t} \]


Since the update equation does not have a normalization step, when \proptorflow was enabled in late 2011 in the actual Tor network, the absolute values of estimated bandwidth grew without bound%
\footnote{As can be seen on this graph: \url{https://metrics.torproject.org/totalcw.html?start=2011-06-01&end=2011-12-31}}.  This caused the Tor network to turn off \proptorflow and switch to using a version of $\torflow$ that uses adjusted observed bandwidth instead, called $\sbws$. We will call estimates produced by this version $C^A$, with:
\[ C^A_{t+1}[j] = b_t[j] \obs_t[j] / \bar{\obs_t} \] 
This version uses the observed bandwidth published by the relay itself, but adjusts it down if the observed bandwidth is below average or up if it is above. It has been in use in Tor since 2012
; however, it has a number of disadvantages. A relay that is not sufficiently loaded may underestimate its observed bandwidth; this leads to a well-documented ramp-up period of new relays, where their low observed bandwidth leads to a small estimated capacity and low load, which in turn leads to low observed bandwidth~\cite{tor-relay-lifecycle}.
But even established relays see their observed bandwidth change. Figure~\ref{fig:observed_bw} shows the observed bandwidth of 10 randomly selected relays over the month of May 2020, demonstrating that the observed values vary significantly over time. 


\begin{figure}
\centering
    \includegraphics[width=0.7\columnwidth]{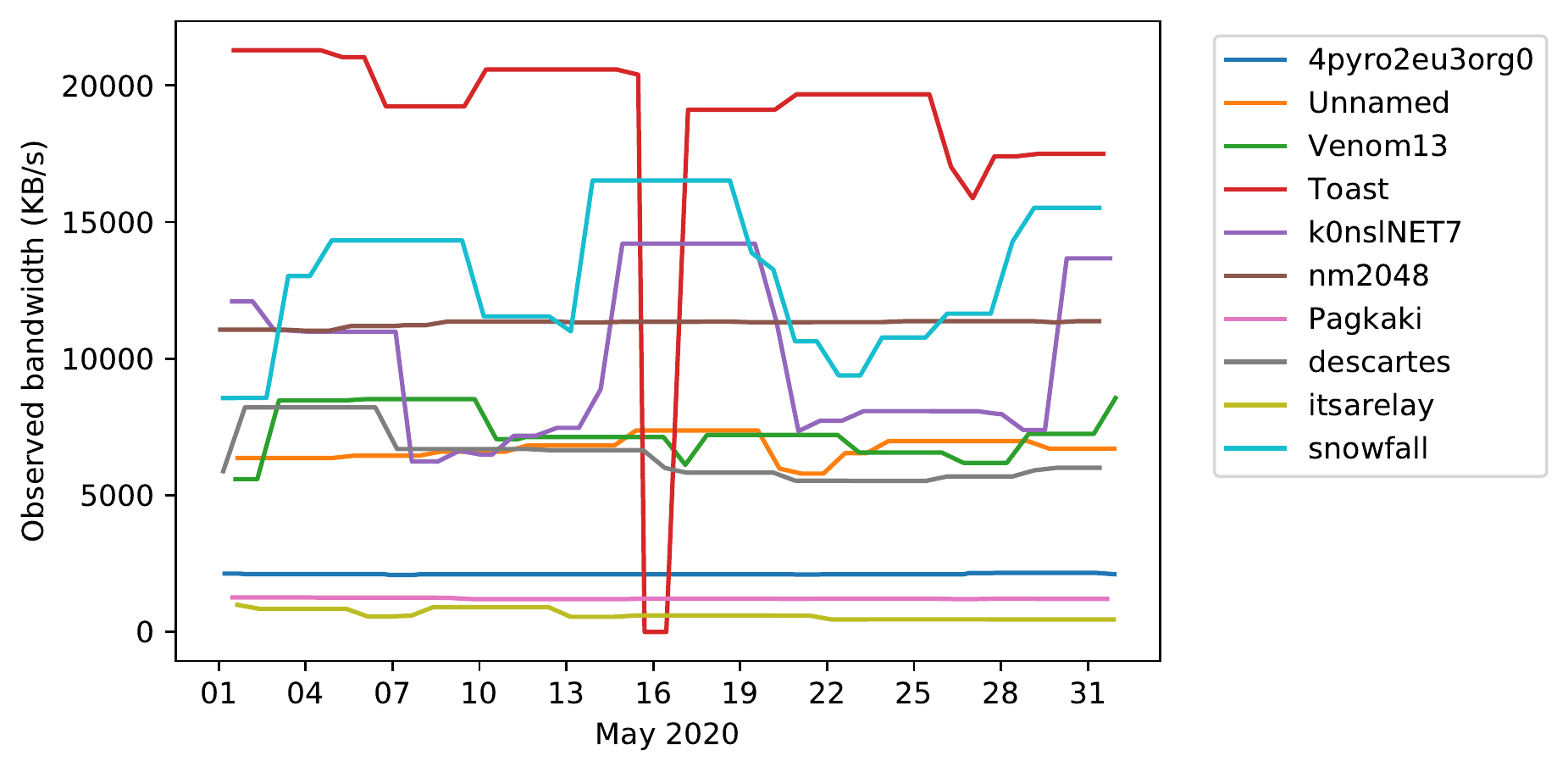}
    \hfill
    \caption{Variation in observed bandwidth in Tor relays over the month of May 2020: plot of 10 randomly selected relays.} 
    \label{fig:observed_bw}
\end{figure}



A new method for estimating the capacity of relays based on maximum likelihood estimation was proposed with $\ourmethod$~\cite{mleflowp}. To develop $\ourmethod$, a probabilistic model relating the actual relay capacities $C^*[j]$'s and the bandwidth measurements $\obs[j]$'s was derived. In order to derive this relationship, a number of assumptions was made in \cite{mleflowp}:
\begin{enumerate}
 \item  
 Relays fall into a single category and each user path goes through only a single relay. 
 \item A synchronized model where time is divided into epochs and user connections all terminate at the end of each epoch.
 At the end of the epoch the weight vector is updated and the new vector is used by all users in the next epoch.
 \item Users arrive to the network randomly following a Poisson process with rate $\rate$, denoted by $\pois(\rate)$.
\item Clients circuits use all the bandwidth allocated to them and are only bottlenecked by the relays.
 \end{enumerate}
The estimates produced by this algorithm, denoted $C^H$ are then computed using: 
\[ C_{t+1}^H[j] = \underset{\Cq \in \capacityset}{\operatorname{argmax}}\ f(\Cq, \obs_{[t]}[j], \weight_{[t]}[j]), \text{ where}\]
\[f(\Cq, \obs_{[t]}[j], \weight_{[t]}[j])=  \prod_{i=0}^{t} \frac{e^{-\rate w_i[j]}}{\big(\frac{\Cq}{\obs_i[j]}-1\big)!}(\rate w_i[j])^{\frac{\Cq}{\obs_i[j]}-1}\]
While the algorithm derived showed a lot of promises, the model used to derive the estimator assumed that users can utilize arbitrary amounts of bandwidth and are only bottlenecked on the Tor network. In practice, at times the client demand on Tor is lower than the overall available bandwidth. When simulated in a low-load scenario, $\ourmethod$ tended to misestimate relays capacities. Also, due to this assumption, the probabilistic model in ~\cite{mleflowp} is expected to produce useful results for exit relays since those relays are generally scarce in the Tor network and are expected to be paths bottleneckes. This is not true for guard and middle relays. As shown in both Python and Shadow simulations, guard and middle relays had larger estimation errors when using $\ourmethod$.

\section{$\dualmethod$: Maximum Likelihood Estimation of Relays Capacities using dual probing}
\label{sec:theory}

We propose a new method $\dualmethod$ for estimating the capacity of relays based on maximum likelihood estimation (MLE). We use the same assumptions used to derive $\ourmethod$ while relaxing the last assumption made. The paths in our model consist of single relays. We assume that time is divided into epochs and connections terminate at the end of each epoch.  We denote the number of paths passing through the $j^{\mathit{th}}$ relay in the $i^{\mathit{th}}$ epoch by $\pathnum_i[j]$. Hence, given $\weight_i$, the number of paths using the $j^{\mathit{th}}$ relay during the $i^{\mathit{th}}$ epoch is a random variable $\pathnumrand_i[j]$ with distribution $\pois(\rate\weight_i[j])$. We design an updated measurement mechanism based on two measurement probes assigned by the authorities to each relay instead of just one. The result of the two measurements is used to group the relays in two different groups, then derive a probabilistic model for each case and apply MLE. 
In this section, we will present the new measurement scheme proposed as well as the probabilistic models derived to relate the actual capacity of a relay to the measurements obtained. We then derive analytical guarantees to the new estimation algorithm proposed.

\subsection{Measurement mechanism}
\label{sec:meas_mech}
Currently, {\em Tor authorities} assign a measurement probe to each relay in the network, download a file of a certain size and measure the realized bandwidth of the probe. Instead of having only one measurement probe for each relay, we propose having two measurement probes added sequentially to each relay. The authorities start by activating the first probe and measure its realized bandwidth, denoted $\obs^1_i[j]$. Then, while the first probe is still active, the authority starts the second probe circuit and measure its bandwidth, denoted $\obs^2_i[j]$. 

Using both \emph{non-noisy} measurements of a  relay $j$ at a given epoch $t$, we can divide the relays in two groups and derive a probabilistic model for each one. First, we let $C_{client,i}[j]$ be the total bandwidth used by clients using relay $j$ during epoch $i$. 

\textbf{Case 1: Relay is not bottlenecked by clients \underline{and} when adding the two probes.} 

When the clients using relay $j$, during the $i^{th}$ epoch, are not using all the available bandwidth of the relay we have $C_{client,i}[j] < C^*[j]$. After adding the first probe, in the case where the capacity of the relay was equally divided between the users and the probe, each path using relay $j$ will have a bandwidth of $\frac{C^*[j]}{\pathnum_i[j] + 1}$. Hence all the clients using this relay will have a total bandwidth of $\pathnum_i[j]\frac{C^*[j]}{\pathnum_i[j] + 1}$. If $C_{client,i}[j] < \pathnum_i[j]\frac{C^*[j]}{\pathnum_i[j] + 1}$, then client utilization is not affected by adding the first probe and the probe will use all the remaining \emph{unused capacity} of the relay, notably $\obs^1_i[j] = C^*[j]-C_{client,i}[j]$. The same logic is true when adding the second probe. If $C_{client,i}[j] < \pathnum_i[j]\frac{C^*[j]}{\pathnum_i[j] + 2}$ then client utilization is not affected by adding the second probe. The remaining \emph{unused capacity} will then be equally divided between the two probes, notably $\obs^2_i[j]=\frac{C^*[j]-C_{client,i}[j]}{2} = \frac{\obs^1_i[j]}{2}$. Figure~\ref{fig:not_bottlenecked} illustrate the idea aforementioned.%
\begin{figure}
    \centering
    \includegraphics[width=0.9\columnwidth]{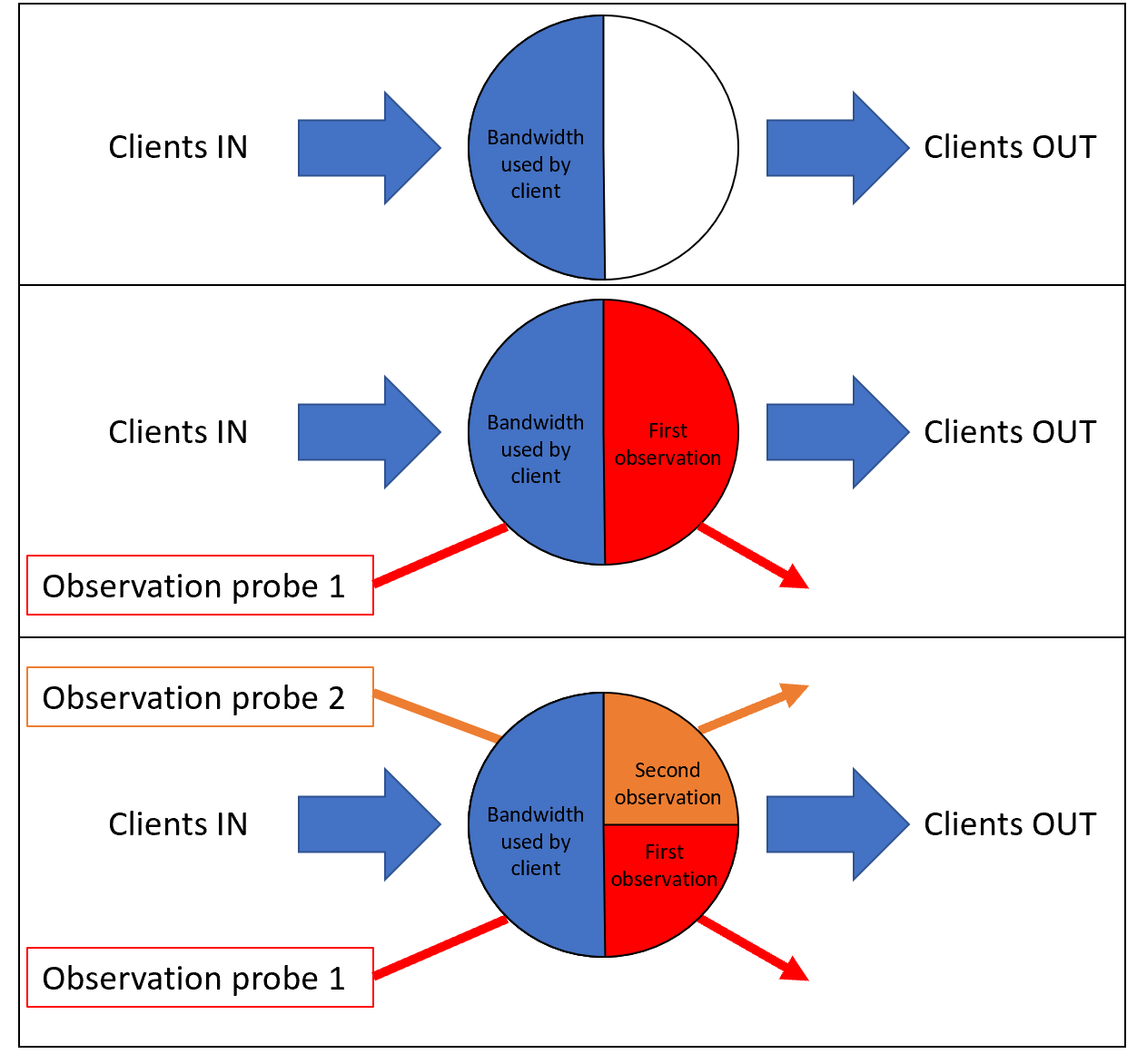}
    \caption{\textbf{(Case 1)} Relay not bottlenecked by users.}
    \label{fig:not_bottlenecked}
\end{figure}

\textbf{Case 2: Relay bottlenecked by clients \underline{or} when adding any of the two probes.}
\begin{enumerate}[(a)]
    \item If the clients are using all the available capacity of the relay there will be no remaining unused capacity. Hence, as in \cite{mleflowp}, the capacity of the relay will be divided equally between all the paths going through it, notably $\obs^1_i[j]=\frac{C^*[j]}{\pathnum_i[j]+1}$ and $\obs^2_i[j]=\frac{C^*[j]}{\pathnum_i[j]+2}$. Figure~\ref{fig:bottlenecked} illustrates this case.
    \item In this case, the clients are not using all the available capacity of relay. However, when adding the first probe, the client utilization can be affected if $C_{client,i}[j] > \pathnum_i[j]\frac{C^*[j]}{\pathnum_i[j] + 1}$. In other words, clients will then be using all the capacity available to them after the first probe was added and there will be no remaining unused capacity. Thus as case (a), the capacity of the relay will be divided equally between all the paths going through it, notably $\obs^1_i[j]=\frac{C^*[j]}{\pathnum_i[j]+1}$ and $\obs^2_i[j]=\frac{C^*[j]}{\pathnum_i[j]+2}$.
    \item In this case, clients are not using all the available capacity of the relay, nor adding the first probe will affect their utilization. Thus, the first probe will be first assigned all the remaining \emph{unused capacity}, $\obs^1_i[j] = C^*[j]-C_{client,i}[j]$. However, when adding the second probe, the clients utilization is affected if $C_{client,i}[j] > \pathnum_i[j]\frac{C^*[j]}{\pathnum_i[j] + 2}$ and there will be no remaining capacity. Thus, $\obs^2_i[j]=\frac{C^*[j]}{\pathnum_i[j]+2}$.
\end{enumerate}
\begin{figure}
    \centering
    \includegraphics[width=0.9\columnwidth]{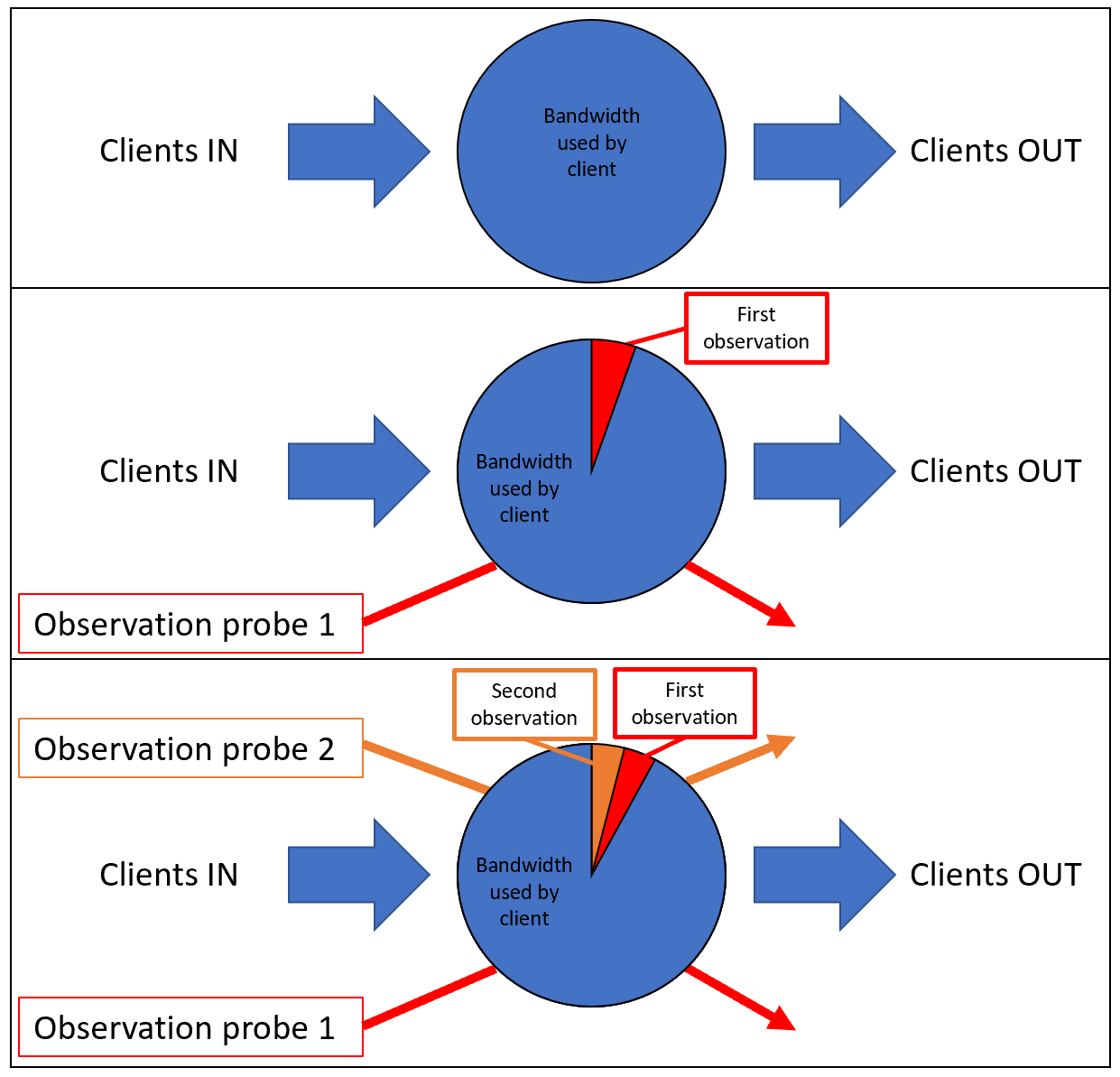}
    \caption{\textbf{(Case 2)} Relay bottlenecked by users.}
    \label{fig:bottlenecked}
\end{figure}

Thus, if the relay falls under case 1, the second observation of the relay will be equal to half the first observation obtained and the relationship between the actual capacity of the relay $C^*[j]$ and the observation is  $\obsrand^2_i[j]=\frac{C^*[j]-C_{client,i}[j]}{2}$. While for relays that fall under the second case, we can't make the same conclusion about $\obs^1_i[j]$ and $\obs^2_i[j]$. However, for all subcases of case 2, we have a relationship between $C^*[j]$ and $\obsrand^2_i[j]$ with $\obsrand^2_i[j]=\frac{C^*[j]}{\pathnumrand_i[j]+2}$.

\subsection{MLE capacities estimation}
\label{sec:MLE_formulation}

In this section, we will show the method used to compute the maximum likelihood estimation of relays capacities using the sequence of pairs of non-noisy measurements and the weights published by the Tor authority. 
More specifically, for any relay $j \in [n]$, the MLE estimate of its actual capacity $C^*[j]$ is the maximizer in $\capacityset \subset \nnnreals$ of the probability of observing the full history of measurements $(\obs^1_{[t]}[j], \obs^2_{[t]}[j])$, given the published weights $\weight_{[t]}[j]$ over the first $t+1$ periods. 
First at each epoch $i \in [t]$, and in order to use the correct probabilistic model, the algorithm compares the values of $\obs^1_i[j]$ and $\obs^2_i[j]$ and determine which case of the two cases discussed in \ref{sec:meas_mech} is true for the relay $j$ at the $i^{th}$ epoch. 
\begin{itemize}
    \item If $\obs^1_i[j] = 2\obs^2_i[j]$, then $\obsrand^2_i[j]=\frac{C^*[j]-C_{client,i}[j]}{2}$.
    \item If $\obs^1_i[j] \neq 2\obs^2_i[j]$, then $\obsrand^2_i[j]=\frac{C^*[j]}{\pathnumrand_i[j]+2}$
\end{itemize}

Assuming that we know total client utilization of each relay during each epoch is not a practical assumption. In this paper, we assume that the average utilization of a client on the Tor network is known and denoted $\clientavg$. Thus the total utilization of a relay $j$ during the $i^{th}$ epoch will be $\pathnumrand_i[j]\clientavg$

We add the superscript $D$ to the capacity estimate to denote that the result of two probes is considered. 

\begin{restatable}[MLE estimates using dual probing]{theorem}{mledual}
 \label{thm:mledual}
For any $j \in [n]$ and $t \in \mathbb{N}$,  the MLE estimate of $C^*[j]$ given the weight and observation pairs vectors $\weight_{[t]}[j]$ and $(\obs^1_{[t]}[j], \obs^2_{[t]}[j])$ is
\begin{align}
\label{eq:mle_full_history_def}
C_{t+1}^D[j] = \underset{\Cq \in \capacityset}{\operatorname{argmax}}\ \prod_{i=0}^{t}f(\Cq, \obs_{i}[j], \weight_{i}[j]), \text{ where}
\end{align}
$f(\Cq, \obs_{i}[j], \weight_{i}[j]) = $

\resizebox{0.45\textwidth}{!}{$
 \left\{
 \begin{array}{ll}
 \Pr_{\pathnumrand_{i}[j] \sim  \pois ( \rate w_{i}[j])} \left( \frac{\Cq-\pathnumrand_{i}[j]\clientavg}{2}=\obs^2_{i}[j]\right) \text{if} \hspace{0.1cm}\obs^1_i[j] = 2\obs^2_i[j] \\
 \Pr_{\pathnumrand_{i}[j] \sim  \pois ( \rate w_{i}[j])} \left( \frac{\Cq}{\pathnumrand_{i}[j] + 2}=\obs^2_{i}[j]\right) \text{if}\hspace{0.1cm} \obs^1_i[j] \neq 2\obs^2_i[j] 
 \end{array}
 \right.$}

We now write the probability in equation~(\ref{eq:mle_full_history_def}) in terms of the known quantities: $\lambda_s$, $\weight_{[t]}$, and $\obs_{[t]}$. 
%
For any $j \in [n]$ and $t \in \mathbb{N}$ the function $f$ of equation~(\ref{eq:mle_full_history_def}) can be written as follows:\\
$f(\Cq, \obs_{i}[j], \weight_{i}[j]) = $
\begin{align}
\label{eq:f_equation}
\resizebox{0.45\textwidth}{!}{$
 \left\{
 \begin{array}{ll}
 \exp \left( - \rate w_{i}[j] \right)\frac{1}{\left(\frac{\Cq-2\obs_i^2[j]}{\clientavg}\right)!}\left( \rate w_{i}[j]\right)^{\frac{\Cq-2\obs_i^2[j]}{\clientavg}}, \text{if} \hspace{0.1cm}\obs^1_i[j] = 2\obs^2_i[j] \\
 \\
 \frac{\exp \left(-\rate w_i[j]\right)}{\big(\frac{\Cq}{\obs^2_i[j]}-2\big)!}(\rate w_i[j])^{\frac{\Cq}{\obs^2_i[j]}-2}, \text{if}\hspace{0.1cm} \obs^1_i[j] \neq 2\obs^2_i[j] 
 \end{array}
 \right.$}
\end{align}
\end{restatable}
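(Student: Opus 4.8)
The plan is to prove the two displayed identities in order: first the probabilistic form of $f$ asserted in the theorem, then its closed form~\eqref{eq:f_equation}. I would begin by showing that the likelihood of the entire measurement history factorizes over epochs. Under the synchronized-epoch model (Assumption~2) and the Poisson-arrival model (Assumption~3), the path counts $\pathnumrand_{0}[j],\dots,\pathnumrand_{t}[j]$ are mutually independent with $\pathnumrand_{i}[j]\sim\pois(\rate w_{i}[j])$, since all connections terminate at the end of each epoch and fresh arrivals are Poisson; moreover, Section~\ref{sec:meas_mech} shows that, given $\pathnumrand_{i}[j]$ and the true capacity $\Cq=C^{*}[j]$, the pair $(\obs^{1}_{i}[j],\obs^{2}_{i}[j])$ is a deterministic function of these. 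Hence the probability of the observed history equals $\prod_{i=0}^{t}\Pr\!\big((\obs^{1}_{i}[j],\obs^{2}_{i}[j])\mid w_{i}[j],\Cq\big)$, and taking the maximizer over $\capacityset$ gives~\eqref{eq:mle_full_history_def}; what remains is to identify each per-epoch factor with the stated branch of $f$.

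For the per-epoch factor I would invoke the regime analysis of Section~\ref{sec:meas_mech}. If relay $j$ is not bottlenecked in epoch $i$ (Case~1), that section establishes the deterministic identity $\obs^{1}_{i}[j]=2\obs^{2}_{i}[j]$ together with $\obsrand^{2}_{i}[j]=(C^{*}[j]-C_{client,i}[j])/2$; if it is bottlenecked (Case~2), it establishes $\obsrand^{2}_{i}[j]=C^{*}[j]/(\pathnumrand_{i}[j]+2)$ while the subcase formulas give $\obs^{1}_{i}[j]\neq 2\obs^{2}_{i}[j]$ (in Cases~2(a)--(b) equality would force $\pathnumrand_{i}[j]=0$, which those subcases exclude, and in Case~2(c) it holds only for the non-generic values of $\pathnumrand_{i}[j]$ solving a fixed quadratic). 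Consequently the comparison $\obs^{1}_{i}[j]=2\obs^{2}_{i}[j]$ is an exact discriminator between the two regimes, and the per-epoch factor is the probability of the observed $\obs^{2}_{i}[j]$ under the selected regime. Substituting the modeling assumption $C_{client,i}[j]=\pathnumrand_{i}[j]\,\clientavg$ into the Case~1 identity then yields the two branches of $f$ in the theorem statement: the probability that $(\Cq-\pathnumrand_{i}[j]\clientavg)/2=\obs^{2}_{i}[j]$ when $\obs^{1}_{i}[j]=2\obs^{2}_{i}[j]$, and the probability that $\Cq/(\pathnumrand_{i}[j]+2)=\obs^{2}_{i}[j]$ otherwise.

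The last step is the rewriting~\eqref{eq:f_equation}, which is a direct computation. In the first branch, the defining equation is linear in the Poisson variable and solves to $\pathnumrand_{i}[j]=(\Cq-2\obs^{2}_{i}[j])/\clientavg$, so the event is a singleton and its probability is obtained by inserting $k=(\Cq-2\obs^{2}_{i}[j])/\clientavg$ into the Poisson pmf $\Pr(\pathnumrand_{i}[j]=k)=e^{-\rate w_{i}[j]}(\rate w_{i}[j])^{k}/k!$; in the second branch, the equation solves to $\pathnumrand_{i}[j]=\Cq/\obs^{2}_{i}[j]-2$, and the same pmf with $k=\Cq/\obs^{2}_{i}[j]-2$ gives the result. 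This reproduces the two lines of~\eqref{eq:f_equation}. I expect the only points needing care to be (i) arguing that the $\obs^{1}=2\obs^{2}$ test separates the regimes exactly rather than approximately --- which rests entirely on the deterministic identities of Section~\ref{sec:meas_mech} and on the fact that the estimator uses $\obs^{1}_{i}[j]$ solely as a regime indicator --- and (ii) noting that $f(\Cq,\cdot,\cdot)>0$ only for those $\Cq\in\capacityset$ making $(\Cq-2\obs^{2}_{i}[j])/\clientavg$, respectively $\Cq/\obs^{2}_{i}[j]-2$, a nonnegative integer, so that the factorials in~\eqref{eq:f_equation} are well defined and the $\operatorname{argmax}$ is effectively restricted to this admissible set.
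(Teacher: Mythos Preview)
Your proposal is correct and follows essentially the same approach as the paper: establish factorization of the likelihood over epochs via the independence of the $\pathnumrand_i[j]$'s, use the comparison $\obs^1_i[j]=2\obs^2_i[j]$ to select the regime from Section~\ref{sec:meas_mech}, then invert each deterministic identity to obtain the singleton event $\pathnumrand_i[j]=k$ and substitute the Poisson pmf. Your write-up is in fact more careful than the paper's in two respects---you explicitly argue why the test $\obs^1_i[j]=2\obs^2_i[j]$ discriminates the regimes (the paper simply invokes the case split), and you flag the integrality constraint on the factorial arguments---but neither point changes the underlying argument.
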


\begin{definition}[$\dualmethodwhole$]
\label{def:dualwhole}
For any relay $j \in [n]$ and $t \in \mathbb{N}$, $\dualmethodwhole$ updates the weight vector by: 
\begin{align}
\label{eq:weightwhole}
    \weight_{t+1}^{\dualexpwhole}[j]=\frac{C_{t}^{D}[j]}{\sum_{k=0}^{n} C_{t}^{D}[k]}.
\end{align}
where we use the supersctipt $\dualexpwhole$ in $\weight_{t}^{\dualexpwhole}$ to identify $\dualmethodwhole$.
\end{definition}
Equation~(\ref{eq:mle_full_history_def}) does not account for non-modeled noise in the measurements. Once noise affects $\obs^1_i[j]$ and $\obs^2_i[j]$, we will not be able to accurately discern between the two cases. 

\subsection{One step maximum likelihood}
\label{sec:one_step_MLE}

In this section, we look at the special case of only considering the last relay measurement at each epoch instead of the whole history of measurements. We add the superscript $D1$ to the capacity estimate to denote that only the last result of the two probes is considered. Equation~\ref{eq:mle_full_history_def} can now be written as,
\begin{align}
\label{eq:mle_one_def}
C_{t+1}^{D1}[j] = \underset{\Cq \in \capacityset}{\operatorname{argmax}}\ f(\Cq, \obs_{t}[j], \weight_{t}[j]), \text{ where}
\end{align}
$f(\Cq, \obs_{t}[j], \weight_{t}[j])$ is defined in Equation~\ref{eq:f_equation}. 

To approximate the maximizer of equation~(\ref{eq:mle_full_history_def}), we find $\Cq$ that sets the derivative of $f$ to zero. The result is given in the following theorem. 
\begin{restatable}[One step MLE closed form]{theorem}{oneobsclosedform}
 \label{thm:oneobsclosedform}
 For any $j \in [n]$ and $t \in \mathbb{N}$,  the MLE estimate of $C^*[j]$ given the last weight and observations pair, $\weight_{t}[j]$ and ($\obs^1_{t}[j], \obs^2_{t}[j]$ is
 \begin{align}
C_{t+1}^{D1}[j] &= \left\{
 \begin{array}{ll}
 \rate \weight_{t}[j]\clientavg + 2\obs_t^2[j], \text{if} \hspace{0.1cm}\obs^1_i[j] = 2\obs^2_i[j] \\
 \\
 \obs_t^2[j]\left(\rate \weight_{t}[j]+ 2 \right), \text{if}\hspace{0.1cm} \obs^1_i[j] \neq 2\obs^2_i[j] 
 \end{array}
 \right.
\label{eq:onecls_form}
\end{align}
\end{restatable}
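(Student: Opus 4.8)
The plan is to maximize, in each of the two cases of Equation~(\ref{eq:f_equation}), the single‑epoch likelihood $f(\Cq,\obs_{t}[j],\weight_{t}[j])$ over $\Cq\in\capacityset$ by passing to the logarithm and locating the unique stationary point, exactly the device used for the closed form of $\ourmethodclosed$ in \cite{mleflowp}. Fix $j$ and $t$ and abbreviate $w=\weight_{t}[j]$, $\obs^{1}=\obs^{1}_{t}[j]$, $\obs^{2}=\obs^{2}_{t}[j]$. In \textbf{Case 1} ($\obs^{1}=2\obs^{2}$), introduce the variable $u=\dfrac{\Cq-2\obs^{2}}{\clientavg}$, which is an increasing affine function of $\Cq$, so that $f=e^{-\rate w}\,(\rate w)^{u}/\,u!$ with $u!$ read as $\Gamma(u+1)$. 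Then $\ln f=-\rate w+u\ln(\rate w)-\ln\Gamma(u+1)$ and
\[
\frac{d}{d\Cq}\ln f=\frac{1}{\clientavg}\bigl(\ln(\rate w)-\psi(u+1)\bigr),
\]
where $\psi$ is the digamma function. Replacing $\ln u!$ by its Stirling estimate $u\ln u-u$ (equivalently using $\psi(u+1)\approx\ln u$), the stationarity condition $\tfrac{d}{d\Cq}\ln f=0$ becomes $\ln(\rate w)=\ln u$, i.e. $u=\rate w$; undoing the substitution gives $\Cq=\rate w\,\clientavg+2\obs^{2}$, the first branch of~(\ref{eq:onecls_form}).

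In \textbf{Case 2} ($\obs^{1}\neq2\obs^{2}$), set $v=\dfrac{\Cq}{\obs^{2}}-2$, again an increasing affine function of $\Cq$, so that $f=e^{-\rate w}\,(\rate w)^{v}/\,v!$ and the identical computation yields $\dfrac{d}{d\Cq}\ln f=\dfrac{1}{\obs^{2}}\bigl(\ln(\rate w)-\psi(v+1)\bigr)$. With the same Stirling approximation the stationary point is $v=\rate w$, hence $\Cq=\obs^{2}(\rate w+2)$, which is the second branch of~(\ref{eq:onecls_form}).

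It then remains to check that each stationary point is a genuine maximizer and lies in the admissible range. Since $\psi$ is increasing, $u\mapsto u\ln(\rate w)-\ln\Gamma(u+1)$ is concave, and because $\Cq\mapsto u$ and $\Cq\mapsto v$ are affine, $\ln f$ is concave in $\Cq$ on each branch, so the unique stationary point found above is the global maximizer of $f$ over the relevant half‑line. Finally $\rate w\,\clientavg+2\obs^{2}\ge 2\obs^{2}\ge 0$ and $\obs^{2}(\rate w+2)\ge 2\obs^{2}\ge 0$, so both candidates are feasible whenever $\capacityset$ is taken large enough to contain them, and the claim follows. The only non‑routine ingredient is the Stirling/digamma approximation that turns $\psi(u+1)=\ln(\rate w)$ into the clean expression $u=\rate w$: this is precisely where the word ``approximate'' in the statement enters, and I would justify it (and, if desired, bound the resulting error) in the same manner as the $\ourmethodclosed$ derivation rather than re‑deriving it here.
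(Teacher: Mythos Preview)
Your proposal is correct and follows essentially the same route as the paper's proof: take the logarithm of the single-epoch likelihood, replace $\ln(\cdot)!$ by its Stirling approximation, differentiate in $\Cq$, and solve the resulting stationarity equation in each of the two cases. Your presentation is marginally tidier—the affine substitutions $u,v$ and the use of the digamma make the structure clearer, and you add the concavity and feasibility checks that the paper leaves implicit—but the underlying argument is identical.
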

From Theorem~\ref{thm:oneobsclosedform}, in the case where the relay is not bottlenecked, i.e. $\obs^1_i[j] = 2\obs^2_i[j]$, the estimated capacity is equal to the sum of the unused capacity, $2\obs^2_t[j]$, and the expected capacity used by clients, $\rate w_{t}[j]\clientavg$.
\begin{center}
    $C_{t+1}^{D1}[j]= \underbrace{\rate \weight_{t}[j]\clientavg}_\text{expected capacity used by clients} + \underbrace{\obs_t^1[j]}_\text{capacity unused by clients}$
\end{center}
While in the case where the relay is bottlenecked, the estimated capacity is the result of the product of the observation, $\obs^2_t[j]$, and the expected number of users, $\rate\weight_t[j]$. The added two in the denominator refers to the two probes added.  
\begin{center}
    $C_{t+1}^{D1}[j]= \obs_t^2[j]\left(\underbrace{\rate \weight_{t}[j]}_\text{expected number of users} + 2 \right)$
\end{center}
\begin{definition}[$\dualmethodone$]
\label{def:dualone}
For any relay $j \in [n]$ and $t \in \mathbb{N}$, $\dualmethodone$ updates the weight vector by: 
\begin{align}
\label{eq:weightone}
    \weight_{t+1}^{\dualexpone}[j]=\frac{C_{t}^{D1}[j]}{\sum_{k=0}^{n} C_{t}^{D1}[k]}.
\end{align}
where we use the superscript $\dualexpone$ in $\weight_{t}^{\dualexpone}$ to identify $\dualmethodone$.
\end{definition}

\subsection{Convergence of $\dualmethodwhole$ and $\dualmethodone$ estimates}
\label{sec:convergence_in_mean}

In this section we show that, starting with any initial weight vector, the mean of the dual probing algorithm estimates for any relay capacity, whether considering the full history in every update as in $\dualmethodwhole$ (Definition~\ref{def:dualwhole}) or only the most recent measurement in every update as in $\dualmethodone$ (Definition~\ref{def:dualone}), converges to the actual relay capacity. We also show that the variance of the estimates when using $\dualmethodwhole$ estimation goes to zero as the number of epochs increases.

\begin{restatable}[Estimates of both methods converge]{theorem}{convguarantees}
\label{thm:conv_gaurantees}
For any $j \in [n]$, $t \in \mathbb{N}$, and a method $y \in \{\proptorflow,\ourmethodclosed\, \dualmethodwhole \}$,
\begin{align}
\label{eq:conv_upperbound}
&\mathbb{E}[C_t^{y}[j]] \leq C^*[j].\\
\label{eq:conv_bound}
\mbox{Moreover, as $t \rightarrow \infty$, }   
&\mathbb{E}[C_{t}^{y}[j]] \geq C^*[j]\left(1 - \frac{1}{\rate \weight^*[j]}\right).
\end{align} 
\end{restatable}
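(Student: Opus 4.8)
The plan is to reduce the statement to the conditional mean of a single update step and then aggregate over epochs. Fix a relay $j$ and let $\mathcal F_t$ denote the information available when the $t$-th consensus is published. By construction $\weight_t[j]$ is $\mathcal F_t$-measurable, conditionally on $\mathcal F_t$ the path count $\pathnumrand_t[j]$ is $\pois(\rate\weight_t[j])$, and by the measurement model of Section~\ref{sec:meas_mech} the observation(s) of epoch $t$ are a deterministic function of $\pathnumrand_t[j]$, $C^*[j]$ and $\clientavg$ (e.g.\ $\obs_t[j]=C^*[j]/(\pathnumrand_t[j]+1)$ for the single-probe methods, and either $\obs^2_t[j]=C^*[j]/(\pathnumrand_t[j]+2)$ or $2\obs^2_t[j]=C^*[j]-\pathnumrand_t[j]\clientavg$ for $\dualmethodwhole$). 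The only analytic facts needed are the Poisson moment identities $\mathbb E[\pathnumrand]=\mu$, $\mathbb E[1/(\pathnumrand+1)]=(1-e^{-\mu})/\mu$, and $\mathbb E[1/(\pathnumrand+2)]=(\mu-1+e^{-\mu})/\mu^2$ for $\pathnumrand\sim\pois(\mu)$, together with the scalar inequalities $1-e^{-\mu}\le\mu$ and $\mu e^{-\mu}\le1$.

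For the upper bound I would treat $y=\proptorflow$ directly: plugging $\obs_t[j]=C^*[j]/(\pathnumrand_t[j]+1)$ into the one-step estimate $C_{t+1}^{y}[j]=\obs_t[j]\,\rate\weight_t[j]$ and using $\mathbb E[1/(\pathnumrand+1)]$ gives $\mathbb E[C_{t+1}^{y}[j]\mid\mathcal F_t]=C^*[j](1-e^{-\rate\weight_t[j]})\le C^*[j]$, and the tower property finishes this case (the epoch $t=0$ being covered by an initialization with $C_0^{y}[j]\le C^*[j]$). The case $y=\ourmethodclosed$ is proved in \cite{mleflowp}, so the new work is $y=\dualmethodwhole$: the likelihood in equation~(\ref{eq:f_equation}) splits the epochs into the bottlenecked ones, whose factor is exactly that of $\ourmethodclosed$ under the formal replacement of $\tfrac{\Cq}{\obs_i}-1$ by $\tfrac{\Cq}{\obs^2_i}-2$, so the $\ourmethodclosed$ argument of \cite{mleflowp} transfers, and the non-bottlenecked ones, for which the per-epoch ML estimate $\rate\weight_i[j]\clientavg+2\obs^2_i[j]$ of Theorem~\ref{thm:oneobsclosedform} is conditionally \emph{unbiased}; a combined accounting across the two branch types, using Jensen's inequality for the convex map $x\mapsto1/x$ wherever a ratio of sums appears, keeps the aggregate $\le C^*[j]$.

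For the asymptotic lower bound I would first establish the matching per-epoch inequality $\mathbb E[C_{t+1}^{y}[j]\mid\mathcal F_t]\ge C^*[j](1-1/(\rate\weight_t[j]))$ — for $\proptorflow$ this is exactly $1-e^{-u}\ge1-1/u$ with $u=\rate\weight_t[j]$, and for the other two methods it follows from the same identities and $\mu e^{-\mu}\le1$. To turn $\weight_t[j]$ into the claimed $\weight^*[j]=C^*[j]/\sum_k C^*[k]$, I would pass to $t\to\infty$: the upper bound shows $\{\mathbb E[C_t^{y}[j]]\}_t$ is bounded, and together with the vanishing of $\Var[C_t^{\dualmethodwhole}[j]]$ (the companion result for $\dualmethodwhole$, with analogues in \cite{mleflowp} for the other two) the weight vector $\weight_t$ concentrates on the unique fixed point $\tilde\weight$ of the mean update; one then rewrites $\weight^*[j]=S\,\tilde\weight[j]/(1-e^{-\rate\tilde\weight[j]})$ with $S:=\sum_k\weight^*[k](1-e^{-\rate\tilde\weight[k]})\in(0,1)$, so that the target $\rate\weight^*[j]\,e^{-\rate\tilde\weight[j]}\le1$ collapses to $S\le(e^{u}-1)/u$ with $u=\rate\tilde\weight[j]$, which holds since $S<1<(e^{u}-1)/u$. (When $\rate\weight^*[j]\le1$ the asserted bound is nonpositive and hence trivial, capacities being nonnegative.)

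The main obstacle, and where most of the effort goes, is the coupling of the $n$ relays through the normalization $\weight_t[j]=C_t^{y}[j]/\sum_k C_t^{y}[k]$: the per-epoch inequalities are clean, but their right-hand sides are ratios of dependent random variables, so neither bound iterates relay-by-relay, and closing the lower bound forces control of the joint limiting behaviour of $(C_t^{y}[1],\dots,C_t^{y}[n])$ — existence, uniqueness and attractivity of the fixed point $\tilde\weight$ of the mean map, and, for $\dualmethodwhole$, the vanishing of second moments so that $\weight_t$ really does concentrate there. A secondary, $\dualmethodwhole$-specific difficulty is that the branch in force at epoch $i$ (bottlenecked versus not) is itself random and correlated with $\pathnumrand_i[j]$, so the conditional-mean computation has to proceed branch by branch and be recombined, rather than through a single formula as for the other two methods.
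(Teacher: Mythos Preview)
Your per-epoch conditional-expectation approach is genuinely different from the paper's, and for the one-step method $\proptorflow$ it is complete and arguably cleaner: the identity $\mathbb{E}[1/(\pathnumrand+1)]=(1-e^{-\mu})/\mu$ gives both bounds directly, and the fixed-point manipulation relating $\tilde w$ to $w^*$ through $S<1<(e^u-1)/u$ is correct.

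The gap is in the full-history estimators. The quantity $C_{t+1}^{\dualexpwhole}[j]$ is the $\operatorname{argmax}$ over $\Cq$ of a \emph{product} of $t+1$ likelihood factors; it is not an average, a convex combination, or any other simple aggregate of the one-step maximizers of Theorem~\ref{thm:oneobsclosedform}. Knowing that each one-step estimate has conditional mean at most $C^*[j]$ (or is exactly unbiased in the non-bottlenecked branch) therefore says nothing direct about the full-history maximizer, and your ``combined accounting via Jensen for $x\mapsto 1/x$'' does not bridge this: you have produced no explicit functional of $(\obs_{[t]},w_{[t]})$ equal to $C_{t+1}^{\dualexpwhole}[j]$ whose expectation you can then bound. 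The paper's route is precisely to manufacture that functional. It applies Stirling to the log-likelihood, differentiates in $\Cq$, and solves the first-order condition in closed form: in the always-bottlenecked regime this yields a Lambert-$W_0$ expression which, after a Taylor expansion valid for large $\rate$, becomes
\[
C_{t+1}^{D}[j]\;\approx\;C^*[j]\,\exp\!\left(\frac{\sum_{i=0}^{t}\tfrac{1}{\obs^2_i[j]}\log\tfrac{\obs^2_i[j]\rate w_i[j]}{C^*[j]}}{\sum_{i=0}^{t}\tfrac{1}{\obs^2_i[j]}}\right),
\]
an explicit random variable whose mean is then controlled by the argument of \cite{mleflowp}; in the always-non-bottlenecked regime a Newton-type linearization of the first-order condition gives an iteration whose steady state is $\Cq=C^*[j]$ exactly, so both bounds hold with equality there. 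Without such a closed form your per-epoch inequalities have nothing to act on for the full-history MLE. (Like you, the paper treats the two regimes separately rather than mixed, and also defers the cross-relay coupling you correctly flag as the main obstacle to \cite{mleflowp}.)
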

\begin{restatable}[More users paths leads to a better convergence]{corollary}{convergenceInMean}
As the rate of users arrival $\lambda_s \rightarrow \infty$, for any $j \in [n]$, $t \in \mathbb{N}$, and method $y \in \{\dualmethodone,\dualmethodwhole\}$,
$\mathbb{E}[C_{t}^{y}[j]] \rightarrow C^*[j]$.
\end{restatable}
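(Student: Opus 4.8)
The plan is to prove the corollary directly for each fixed $t$, rather than passing through the $t\to\infty$ lower bound of Theorem~\ref{thm:conv_gaurantees}. The argument rests on three ingredients: (i) convergence in probability $C_t^{y}[j]\to C^*[j]$ as $\rate\to\infty$, proved by induction on the epoch index; (ii) a uniform-integrability bound on the family $\{C_t^{y}[j]\}_{\rate}$; and (iii) the standard fact that convergence in probability together with uniform integrability yields convergence in mean, which is precisely $\mathbb{E}[C_t^{y}[j]]\to C^*[j]$.

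First I would set up the induction on $t$, carrying the joint hypothesis that for every relay $k$ one has $C_t^{y}[k]\to C^*[k]$ in probability and hence, by the continuous mapping theorem applied to the normalization (whose denominator converges to $\sum_{k'}C^*[k']>0$), $\weight_t[k]\to \weight^*[k]$ in probability, with $\weight^*[k]=C^*[k]/\sum_{k'}C^*[k']>0$. The base case uses the fixed, strictly positive initial weights $\weight_0$ (positivity then self-propagates, since the first estimates are already $\Theta(C^*)$). For the inductive step I would condition on the realized weights $\weight_t$: given $\weight_t[j]\to \weight^*[j]>0$, the count $\pathnumrand_t[j]\sim\pois(\rate\weight_t[j])$ has mean $\rate\weight_t[j]\to\infty$, so Poisson concentration gives $\pathnumrand_t[j]/(\rate\weight_t[j])\to 1$ in probability. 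In particular the Case~1 event $\{\pathnumrand_t[j]<C^*[j]/\clientavg-2\}$ has probability tending to $0$, so the relay sits in Case~2 with high probability.

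Next I would push this through each estimator. For $\dualmethodone$ I would invoke the closed form of Theorem~\ref{thm:oneobsclosedform}: in Case~2, $C_{t+1}^{D1}[j]=C^*[j]\,(\rate\weight_t[j]+2)/(\pathnumrand_t[j]+2)$, and since $(\pathnumrand_t[j]+2)/(\rate\weight_t[j]+2)\to 1$ in probability this tends to $C^*[j]$, while the vanishing-probability Case~1 contributes nothing in the limit. For $\dualmethodwhole$, lacking a closed form, I would argue consistency of the MLE: each Case~2 factor $f(\Cq,\obs_i[j],\weight_i[j])$ is a $\pois(\rate\weight_i[j])$ pmf evaluated at $\Cq/\obs_i^2[j]-2=\Cq(\pathnumrand_i[j]+2)/C^*[j]-2$, individually maximized in $\Cq$ at the mode, i.e.\ at $\Cq=\obs_i^2[j](\rate\weight_i[j]+2)\to C^*[j]$; as $\rate\to\infty$ every factor's peak sharpens and concentrates at $C^*[j]$, so the maximizer of the product over $\capacityset$ does too. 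Finally I would upgrade to convergence in mean: for $\dualmethodwhole$, boundedness of $\capacityset$ makes $C_t^{D}[j]$ bounded, so the bounded convergence theorem applies; for $\dualmethodone$ I would check $\sup_{\rate}\mathbb{E}[(C_{t+1}^{D1}[j])^2]<\infty$ using the Case~2 estimate $\mathbb{E}[(\pathnumrand+2)^{-2}]\le(1-e^{-\mu}(\mu+1))/\mu^2$ for $\pathnumrand\sim\pois(\mu)$, together with the exponentially small probability of Case~1 to dominate its polynomially large value.

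The step I expect to be the main obstacle is the $\dualmethodwhole$ concentration: turning ``each single-epoch factor peaks at a point converging to $C^*[j]$'' into a rigorous statement about the argmax of the product over $\capacityset$ requires handling the discrete factorial terms (e.g.\ via a Stirling/$\log\Gamma$ approximation) and a uniform-in-$\Cq$ lower bound on the log-likelihood gap away from $C^*[j]$, i.e.\ a genuine consistency argument rather than a pointwise one. A secondary care point is that the weights $\weight_t[\cdot]$ are themselves random and $\rate$-dependent for $t\ge 1$, so every Poisson concentration and moment estimate must be taken conditionally on the history and then averaged; this is exactly where the induction hypothesis (weights bounded in $[0,1]$ and converging to $\weight^*$) carries the load.
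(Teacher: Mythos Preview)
The paper gives no separate proof of this corollary; it is meant to follow in one line from Theorem~\ref{thm:conv_gaurantees} by squeezing: the upper bound $\mathbb{E}[C_t^{y}[j]]\le C^*[j]$ and the lower bound $\mathbb{E}[C_t^{y}[j]]\ge C^*[j]\bigl(1-1/(\rate\weight^*[j])\bigr)$ force $\mathbb{E}[C_t^{y}[j]]\to C^*[j]$ as $\rate\to\infty$.

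Your route is genuinely different. Rather than invoking the two-sided mean bounds, you establish consistency at each fixed epoch by induction, pushing Poisson concentration through the closed-form one-step estimator and through the full-history likelihood, and then upgrade convergence in probability to convergence in mean via uniform integrability. What your approach buys: it is self-contained, it speaks directly to the fixed-$t$ formulation of the corollary (whereas the lower bound in Theorem~\ref{thm:conv_gaurantees} is stated only in the $t\to\infty$ limit), and it covers $\dualmethodone$ explicitly, which is in the corollary's method list but not in that of Theorem~\ref{thm:conv_gaurantees}. What the paper's route buys: once the two bounds are in hand it is a one-line squeeze, sidestepping the argmax-consistency argument for $\dualmethodwhole$ that you rightly flag as the most delicate step of your plan. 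Your handling of the Case~1 contribution (polynomially large estimate on an event of exponentially small probability) and the UI bound for $\dualmethodone$ are both sound at the level of a proof sketch.
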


Furthermore, we show that the variance of the estimates of $\dualmethodwhole$ converge to zero.  This shows that $\dualmethodwhole$ provides stable and consistent estimates.
\begin{restatable}
[Convergence of variance of $\dualmethodwhole$]{theorem}{varconvergence}
As $t \rightarrow \infty$, $\Var[C_{t}^{\dualexpwhole}[j]] \rightarrow 0$.
\end{restatable}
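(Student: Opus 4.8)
The plan is to reduce the statement to a convergence-in-probability claim and then exploit uniform integrability. Since $C_t^{\dualexpwhole}[j]$ takes values in the capacity set $\capacityset$, which we may take to be compact, the family $\{C_t^{\dualexpwhole}[j]\}_{t}$ is uniformly bounded, hence uniformly $L^2$-bounded. A uniformly bounded sequence of random variables that converges in probability to a constant converges in $L^2$ to that constant, so if I can show $C_t^{\dualexpwhole}[j] \xrightarrow{p} C^*[j]$ then $\Var[C_t^{\dualexpwhole}[j]] = \mathbb{E}\big[(C_t^{\dualexpwhole}[j]-C^*[j])^2\big] - \big(\mathbb{E}[C_t^{\dualexpwhole}[j]]-C^*[j]\big)^2 \to 0$. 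Thus the entire content is the consistency statement $C_t^{\dualexpwhole}[j]\xrightarrow{p}C^*[j]$; obtained almost surely, it would also re-derive, through the bounded convergence theorem, the limit $\mathbb{E}[C_t^{\dualexpwhole}[j]]\to C^*[j]$, sharpening Theorem~\ref{thm:conv_gaurantees} for this method.

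To prove consistency I would treat $C_{t+1}^{\dualexpwhole}[j] = \arg\max_{\Cq\in\capacityset}\sum_{i=0}^{t} g_i(\Cq)$ as an M-estimator, where $g_i(\Cq) = \log f(\Cq,\obs_i[j],\weight_i[j])$ is the per-epoch log-likelihood of Theorem~\ref{thm:mledual}, with the factorials read as $\Gamma(\cdot+1)$ so that each $g_i$ is continuous in $\Cq$. The first step is to record the distributional structure of a single epoch: conditioned on $\weight_i[j]$, the path count is $\pathnumrand_i[j]\sim\pois(\rate\weight_i[j])$, the relay is in Case~1 of Section~\ref{sec:meas_mech} exactly when $\pathnumrand_i[j] < C^*[j]/\clientavg - 2$, and in either regime the pair $(\obs^1_i[j],\obs^2_i[j])$ is a deterministic function of $\pathnumrand_i[j]$ and $C^*[j]$. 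Hence, given the weight, the regime indicator, the observations, and therefore $g_i(\Cq)$, are all fixed transformations of a single Poisson variable.

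Next I would invoke that the weights generated by Definition~\ref{def:dualwhole} converge, $\weight_t^{\dualexpwhole}[j]\to\weight^*[j]$ (the normalized true capacities used in Theorem~\ref{thm:conv_gaurantees}), so the summands $g_i(\Cq)$ become asymptotically identically distributed and $\tfrac{1}{t+1}\sum_{i=0}^{t} g_i(\Cq)\to\bar g(\Cq):=\mathbb{E}[g_\infty(\Cq)]$ uniformly over the compact set $\capacityset$, by a law of large numbers plus an equicontinuity (bracketing) argument in $\Cq$, where $g_\infty$ is the per-epoch log-likelihood evaluated at the limiting weight. Up to an additive constant, $\bar g(\Cq)$ is the negative Kullback--Leibler divergence between the true law of $(\obs^1_i[j],\obs^2_i[j])$ and the law that candidate $\Cq$ assigns to it; because two distinct capacities either translate or rescale the support of $\obs^2_i[j]$ (an affine image of a Poisson in Case~1, an image of $C^*[j]/(\pathnumrand_i[j]+2)$ in Case~2) and so induce genuinely different laws, this divergence is strictly positive for $\Cq\neq C^*[j]$, so $\bar g$ has a unique maximizer at $C^*[j]$, and on the compact $\capacityset$ that maximizer is well separated. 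Standard argmax consistency for M-estimators (uniform convergence of the objective together with a well-separated unique maximizer) then gives $C_{t+1}^{\dualexpwhole}[j]\xrightarrow{p}C^*[j]$; run with the strong law in place of the weak law it gives almost-sure convergence, the version I would use to finish via bounded convergence.

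The main obstacle is carrying out the law of large numbers in the presence of time-varying weights and a randomly varying bottleneck status: the terms $g_i(\Cq)$ are neither independent (later weights depend on earlier estimates, which depend on all relays) nor identically distributed, and the partition of epochs into the two regimes of Theorem~\ref{thm:mledual} is itself weight-dependent. I would handle this by conditioning on the weight history -- given the weights, the epochs are independent -- and then coupling each $g_i$ with an i.i.d.\ copy drawn at the limiting weight $\weight^*[j]$, so that the coupling error vanishes on average as a consequence of $\weight_t^{\dualexpwhole}\to\weight^*$, and finally applying a uniform (Glivenko--Cantelli-type) law over $\capacityset$. A secondary technical point is verifying the well-separated-maximizer property, and in particular treating the degenerate sub-case $C^*[j]\le 2\clientavg$ in which every epoch is bottlenecked, so that identification rests entirely on the family $\obs^2_i[j]=C^*[j]/(\pathnumrand_i[j]+2)$ rather than on the Case~1 observations.
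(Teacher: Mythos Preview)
Your approach differs substantially from the paper's. The paper does not give a self-contained argument here; it simply refers to the appendix of \cite{mleflowp}. From the proof of Theorem~\ref{thm:conv_gaurantees} one sees that the intended route is to work with the explicit closed form for $C^{D}_{t+1}[j]$ obtained there via Stirling's approximation and the Lambert $W$ function---an expression of the type $C^*[j]\exp\big(\text{weighted average of }\log(\obs_i[j]\rate\weight_i[j]/C^*[j])\big)$---whose variance can be analyzed directly, term by term. Your plan (reduce to almost-sure consistency of the argmax on a compact $\capacityset$, then invoke bounded convergence) is more conceptual and, if it went through, would sidestep the Stirling and Lambert $W$ approximations altogether.

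There is, however, a circularity you only half-address. Your uniform-LLN step requires $\weight_t^{\dualexpwhole}[j]\to\weight^*[j]$, and the coupling you propose for the non-i.i.d.\ structure again rests on $\weight_t^{\dualexpwhole}\to\weight^*$. But by Definition~\ref{def:dualwhole}, $\weight_{t+1}^{\dualexpwhole}[j]=C_t^{D}[j]\big/\sum_k C_t^{D}[k]$, so convergence of the weight vector is \emph{equivalent} to consistency of the estimates across all relays simultaneously---precisely the conclusion you are after. Theorem~\ref{thm:conv_gaurantees} only bounds the asymptotic \emph{expectation} of $C_t^{y}[j]$; it gives neither almost-sure nor in-probability convergence of the random weights, and so does not break the loop. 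To make your scheme work you would need either a joint contraction/fixed-point argument showing the weight vector stabilizes \emph{before} you know the estimator is consistent, or an M-estimator analysis that is uniform over all weight sequences bounded away from $0$; neither is supplied, and without one the argmax-consistency step does not engage.
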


Hence the variances of $\dualmethod$ estimates when considering the full history of weights and measurements are close to the actual capacities. 
 
We will show this experimentally in the next section.

\subsection{$\dualmethod$: Implementation}

In order to solve the maximization of Theorem~\ref{thm:mledual}, we discretize the bounded capacity set $\capacityset$ and iteratively find the maximizer of~(\ref{eq:mle_full_history_def}).  Note that quantization requires knowing a lower and upper bound on the relay capacity, which can be estimated based on past observations.

Consider a partition $\bar{\capacityset}$ of $\capacityset$ into bins. The set $\bar{\capacityset}$ contains the centers of the bins of $\capacityset$.
For any $j \in [n]$, $t \in \mathbb{N}$, and $\Cq \in \Bar{\capacityset}$, we define ${L_t}(j,\Cq)$ to be: 

\begin{align}
\label{eq:L_equation}
\resizebox{0.45\textwidth}{!}{$
 \left\{
 \begin{array}{ll}
 - \rate w_{t}[j] - \log \left(\left(\frac{\Cq-\obs_t^1[j]}{C_{avg,t}}\right)!\right)+\left( \frac{\Cq-\obs_t^1[j]}{C_{avg,t}}\right) \log \left(\rate w_{t}[j] \right), \text{if} \hspace{0.1cm}\obs^1_i[j] = 2\obs^2_i[j] \\
 \\
 - \rate w_{t}[j] - \log \left(\left(\frac{\Cq}{\obs_t^2[j]}-2\right)!\right)+\left( \frac{\Cq}{\obs_t^2[j]}-2\right) \log \left(\rate w_{t}[j] \right), \text{if}\hspace{0.1cm} \obs^1_i[j] \neq 2\obs^2_i[j] 
 \end{array}
 \right.$}
\end{align}

the $t^{\mathit{th}}$ term of the sum when taking the $\log$ of~(\ref{eq:f_equation}).

For $\dualmethodone$, since we are only considering the last measurement, the estimate of the capacity of relay $j$ is computed by iteratively searching for the maximizer $\Cq$ over the discretized capacity set $\bar{\capacityset}$ in the following equation: 
\begin{align}
C_{t+1}^{\dualexpone}[j] := \max_{\Cq \in \Bar{\capacityset}} {L_{t}}(j,\Cq).
\end{align}

For $\dualmethodwhole$, the sum of ${L_t}(j,\Cq)$ over measurement periods is stored in a variable $S_{t+1}(j,\Cq)$:
\begin{center}
    ${S_{t+1}}(j,\Cq)={S}_{t}(j,\Cq) + {L_t}(j,\Cq)$,
\end{center}
with ${S}_{0}(j,\Cq) =0$.
Then, the maximum likelihood estimate is computed by iteratively searching for the maximizer $\Cq$ over the discretized capacity set $\bar{\capacityset}$ in the following equation:
\begin{align}
C_{t+1}^{\dualexpwhole}[j] := \max_{\Cq \in \Bar{\capacityset}} {S_{t+1}}(j,\Cq).
\end{align}

\section{Analytical Model Simulations: Low Fidelity  Experiments}
\label{sec:Py_results}

To better understand the properties and performance of $\proptorflow$, $\ourmethod$, $\sbws$ and $\dualmethod$, we evaluated them using simulations of  our analytical model of the Tor network. These simulations are implemented in Python;  we will make our implementation publicly available at publication time.

We evaluate the performance of the different methods using two metrics: (a) the accuracy of the relay capacity estimates in a network where clients use all the bandwidth allocated to them, (b) the accuracy of the relay capacity estimates in an underloaded network and (c) the amount of bandwidth allocated to the user paths resulting from the weight vectors generated using the capacity estimates.

The simulation algorithm we have used is shown in  Algorithm~\ref{alg:low_experiments}. The algorithm takes as input: the Poisson arrival rate of users $\lambda_s$,  the total number of measurement periods $T$ to be simulated, 
a {\em method} $ \in \{\mathit{Actual}, \proptorflow,$ $\ourmethod, \dualmethodone, \dualmethodwhole\}$ to compute the capacities of the relays from measurements.  The algorithm outputs the bandwidths allocated for users paths and the weight vectors published over all periods between $0$ and $T$.

The simulation algorithm iterates over measurement periods. In each period $i$, it generates the total number of users paths $N_i$ that will join the network by sampling a Poisson distribution with rate $\lambda_s$ in line~\ref{ln:pick_num_paths}. Then, it uses the weight vector $w_i$ computed in the previous period as a probability distribution for the users to choose the relays of their paths from in line~\ref{ln:pick_paths}. In line~\ref{ln:compute_obs1}, it adds the first probe to each relay and uses the max-min fairness bandwidth allocation algorithm to get the bandwidth allocated for each path, and thus generate the observation vector $\obs^1_i$. In line~\ref{ln:compute_obs2}, it adds the second probe and generates $\obs^2_i$. After that, it computes $w_{i+1}$ using the given {\em method} in line~\ref{ln:update_w}. Finally, it deletes all the paths for a fresh start of the next period.   



 
 \begin{algorithm}
\caption{Low fidelity simulation}
\label{alg:low_experiments}
\begin{algorithmic}[1]
\State \textbf{input:} $\lambda_s, T$, $\mathit{method} \in \{\mathit{Actual}$, $\proptorflow, \ourmethod, \dualmethodwhole, \dualmethodone\}$.
\For {$i \in [0,...,T]$}
\State Pick the number of users $N_i \sim Poi(\lambda_s)$. \label{ln:pick_num_paths}
\State Construct users paths of $3$ relays using $w_i$. \label{ln:pick_paths}
\State Compute ${\obs^1_i}$ using max-min bandwidth allocation. \label{ln:compute_obs1}
\State Compute ${\obs^2_i}$ using max-min bandwidth allocation. \label{ln:compute_obs2}
\State Compute $w_{i+1}$ based on $\obs_i$ and $w_i$ using $\mathit{method}$. \label{ln:update_w}
\State Delete all paths in the network. \label{ln:delete_paths}
\EndFor
\State \textbf{return:} $\obs_{0:T}, w_{0:T}$ \label{ln:return}
\end{algorithmic}
\end{algorithm}

We consider a network analogous to the current Tor network with 6037 relays as of June $23^{\mathit{rd}}$ 2020. The relays are distributed as follows: $2351$ are guard relays, $2576$ are middle relays, and $1110$ are exit relays (this includes any relays that have both the \textsf{Exit} and \textsf{Guard} flags set). Lacking a ground truth, we used the measured capacity in the Tor consensus document as the actual capacity of the Tor relays in our simulation. The maximum capacity of all relays was $169\,000$ kb/s, while the total capacity of the guard, middle and exit relays in the network are around $42.6 \times 10^6$, $6.7 \times 10^6$ and $17.7 \times 10^6$ kb/s respectively. Hence, the capacity set is $\capacityset = [0, 169000]$. The capacity distributions of the guard, middle, and exit relays are shown in Figure~\ref{fig:100_network_capacities}.

The max-min bandwidth allocation algorithm~\cite{tightrope, msthesis}, in Python assumes that clients will use the full bandwidth allocated to them. Thus we simulated two types of networks: (1) a full utilization network using the max-min bandwidth allocation algorithm where each client path consists of three relays and (2) an under loaded network scenario taking into account client side bottlenecks. To simulate this idea, we adjusted the simulation to add a bandwidth cap to each client flow. This bandwidth cap is enforced by a fourth relay added to each client flow, with a capacity  selected uniformly at random from the interval [8,18]\,KB/s. This interval was chosen based on the full utilization scenario bandwidth distribution. Since the average bandwidth of flows in the full utilization scenario was approximately 22\,KB/s, the cap means that the flows can utilize at most about 60\% of the Tor network capacity.

To use $\dualmethod$, we need to partition $\capacityset$ into bins.
From Figure~\ref{fig:100_network_capacities}, we use the same technique used in ~\cite{mleflowp}. 
We choose the bins to be intervals of the form $[a^{b-1},a^b]$ where $a$ is a strictly positive real number and $b \in [1,...,b_{max}]$ where $b_{max} = \lceil\frac{\log(\max(C[j]))}{\log(a)}\rceil$ for $j \in [n]$. 

\begin{figure}
    \centering
    \includegraphics[width=\columnwidth]{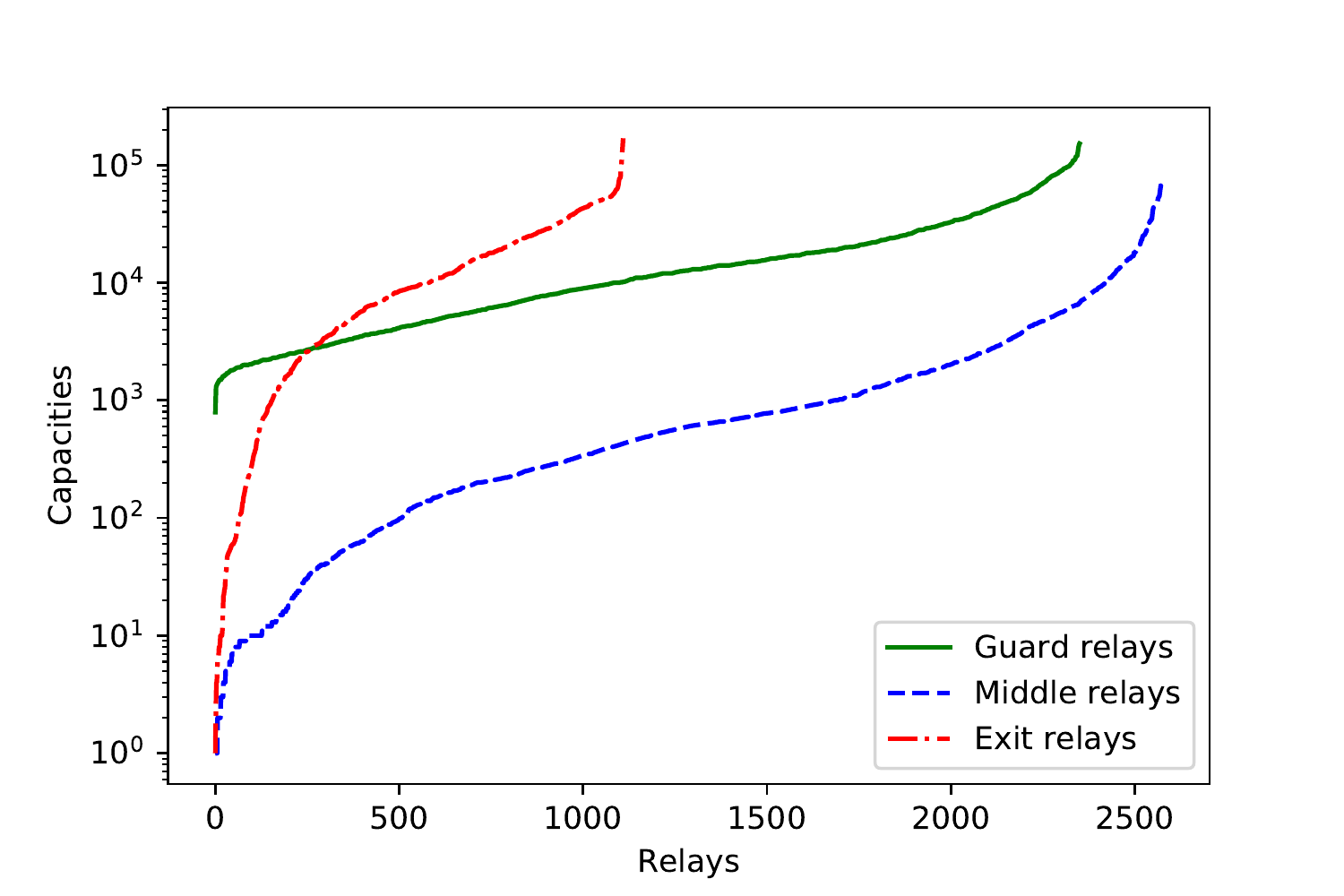}
    \caption{relays capacity distribution.}
    \label{fig:100_network_capacities}
\end{figure}

 \subsection{Low-fidelity sims results and analysis}

 In all of the simulation runs, we chose the number of periods $t$ to be $50$, the initial weight vector $w_0$ to be uniform, the rate of arrival $\lambda_s$ to be $10^{6}$. 
 
  \paragraph{$\dualmethodwhole$ performs better than $\dualmethodone$} While both algorithms of $\dualmethod$ outperformed the other algorithms, $\dualmethodwhole$ had a lower average error than $\dualmethodone$ for all classes of relays. The results are shown in Figure~\ref{fig:diprober_comp}.
 
  \begin{figure}
    \centering
    \includegraphics[width=\columnwidth]{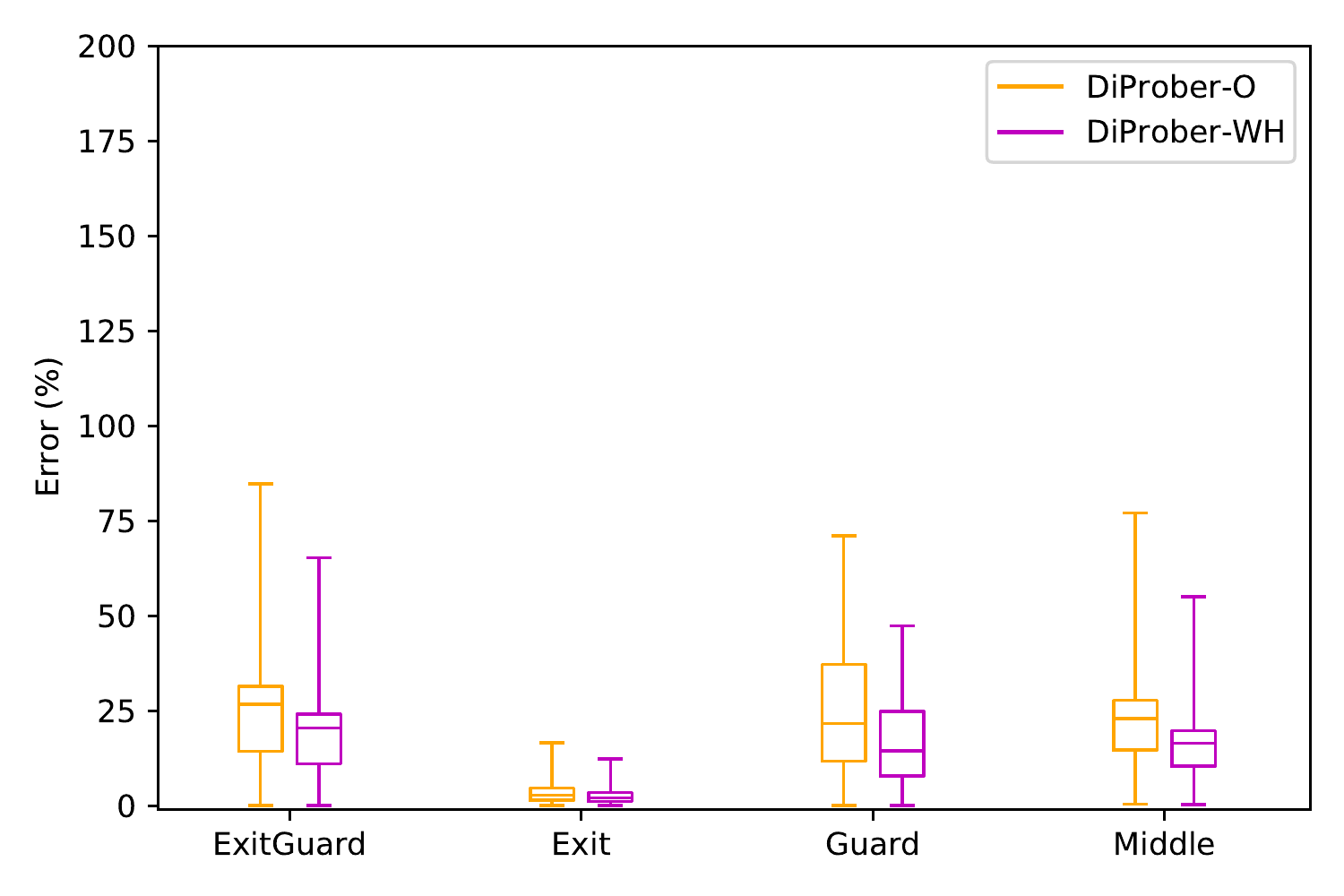}
    \caption{Estimation error distribution after the  $50^{th}$ measurement period in a fully utilized network for both $\dualmethod$ algorithms.}
    \label{fig:diprober_comp}
\end{figure} 
 
 \paragraph{$\dualmethod$ leads to better guard and middle relays estimates in fully loaded networks} We tested the different estimation algorithms on three-relays paths networks. Both $\dualmethod$ algorithms and $\ourmethod$ had lower average error than $\proptorflow$ and $\sbws$. The results are shown in Figure~\ref{fig:error_one_relay}. While the errors for exit relays when using $\dualmethod$ and $\ourmethod$ were close, $\dualmethod$ leads to lower average error for guard and middle relays. The average estimation errors for $\dualmethod$ and $\ourmethod$ stayed below $10\%$ for exit relays, while it was higher for $\proptorflow$  with $72\%$. For guard and middle relays, the error was lowest for  $\dualmethodwhole$ at $18\%$ and $14\%$ respectively. It was higher for $\ourmethod$ at around $25\%$ error and even higher for $\proptorflow$ at $75\%$. 
 
 \begin{figure}
    \centering
    \includegraphics[width=\columnwidth]{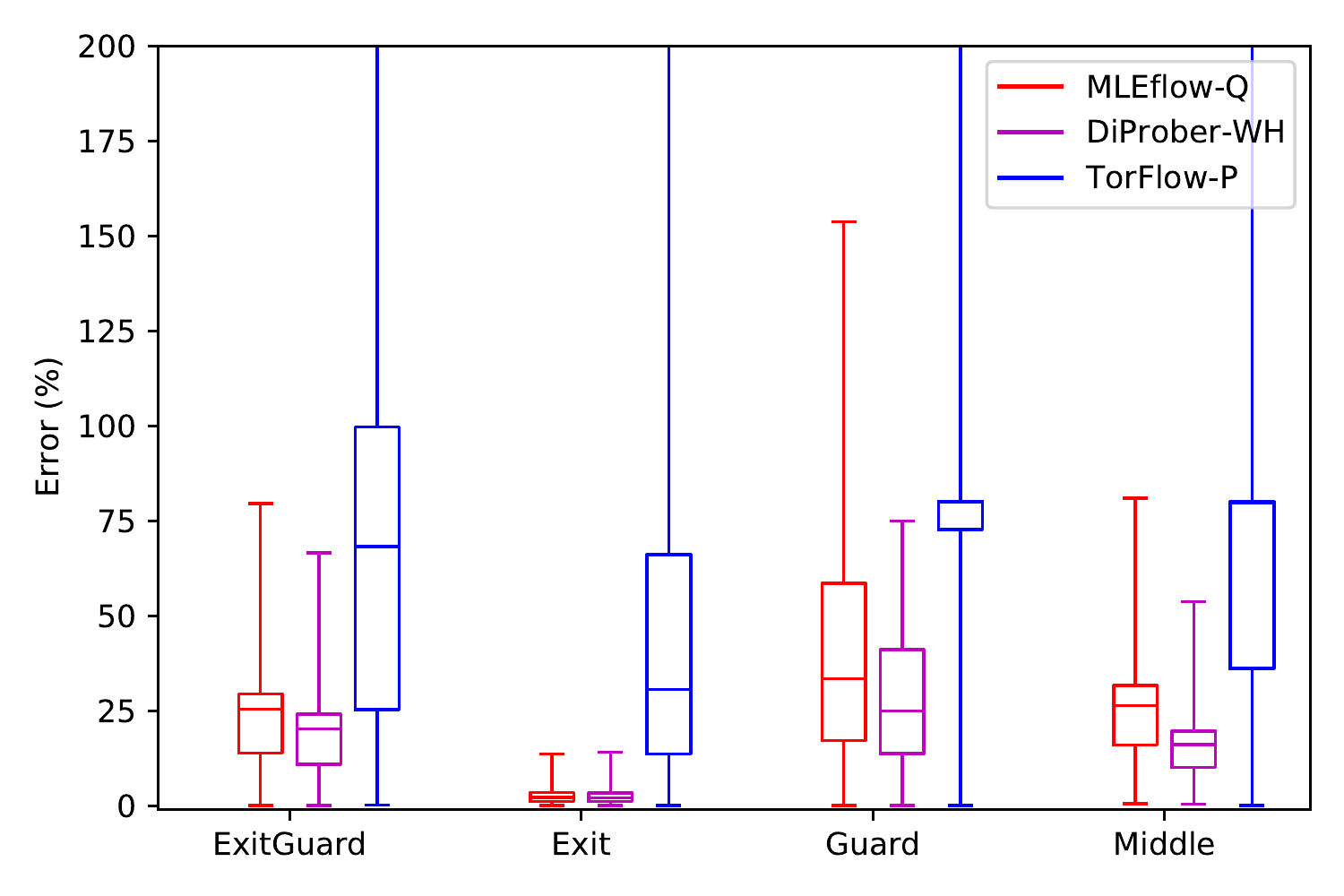}
    \caption{Estimation error distribution after the  $50^{th}$ measurement period in a fully utilized network.}
    \label{fig:error_one_relay}
\end{figure}

 \paragraph{The dual probing algorithm advantage extend to the under-loaded network scenario:} We simulate the case of under-loaded networks as was done in ~\cite{mleflowp} and as described above by adding a forth relay to each path created. $\dualmethod$ outperforms all other algorithms. The error for all classes of relays was below $25\%$ for $\dualmethod$. While the error for exit relays remained relatively low when using $\ourmethod$, other classes estimates all had an average error above $45\%$. The results are shown in Figure~\ref{fig:underloaded}.
 
  \begin{figure}
    \centering
    \includegraphics[width=\columnwidth]{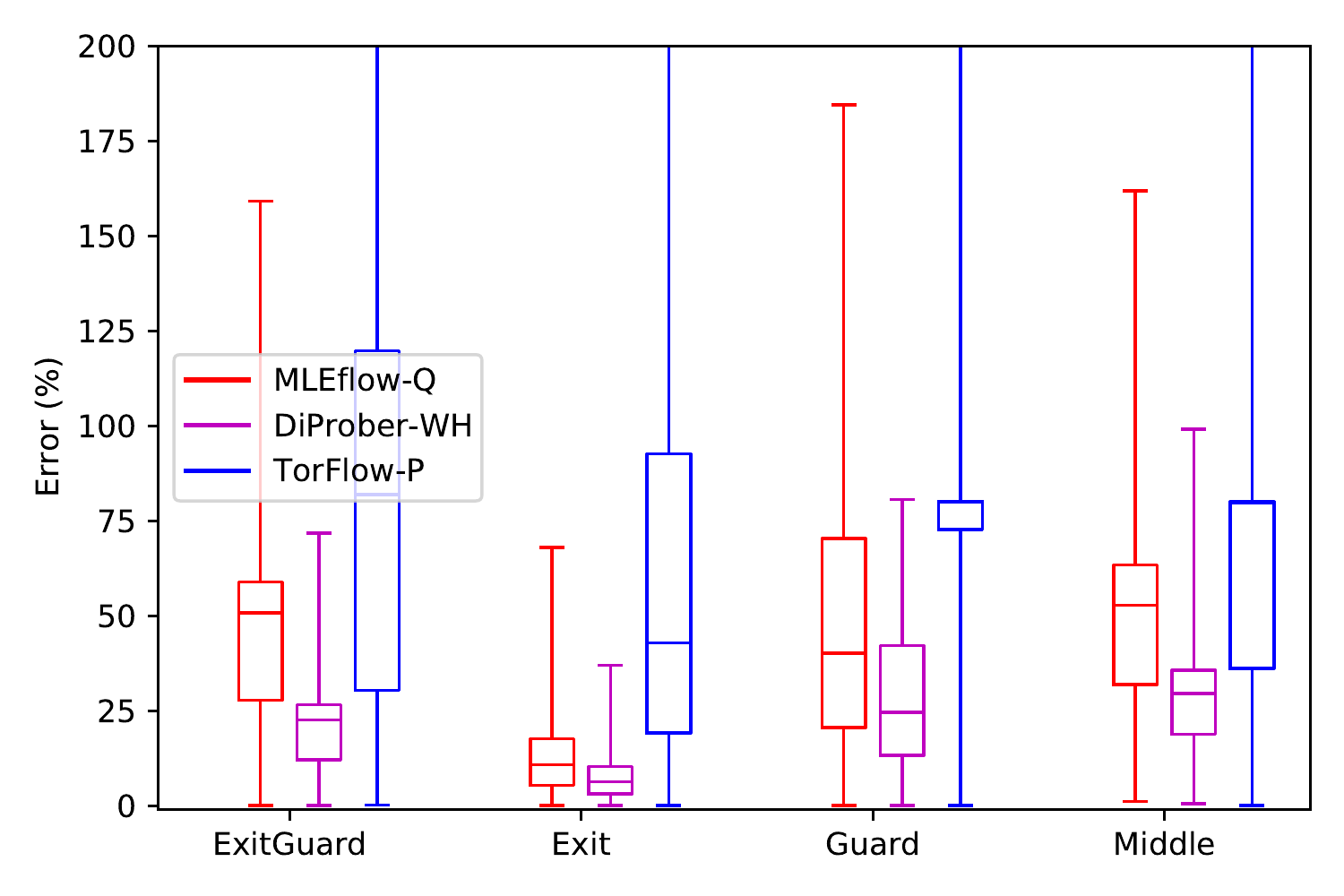}
    \caption{Estimation error distribution after the  $50^{th}$ measurement period in an under-loaded network.}
    \label{fig:underloaded}
\end{figure} 
 
 \paragraph{$\dualmethodwhole$ and $\dualmethodone$ give better and fairer bandwidth allocation than than the other algorithms}
The means of the bandwidths allocated for paths using $\ourmethod$ and $\dualmethod$ are equal to that of the {\em Actual} scenario, while that of $\proptorflow$ is slightly smaller. The advantage of both those methods is the stability of their estimates. The standard deviation of $\dualmethod$ is a maximum of 1.4, while $\ourmethod$ was 1.7, when that of {\em Actual} is around 0.6 and $\proptorflow$ is around 165, orders of magnitude larger.  Moreover, the maximum and minimum bandwidths allocated of our methods are similar to that of {\em Actual} while $\proptorflow$ had orders of magnitude larger maximum. That means that our methods distribute bandwidths more fairly than $\proptorflow$.

\section{Related Work}

Improving the performance of the Tor network has been the subject of much research; we refer the reader to the survey by AlSabah and Goldberg for an overview~\cite{tor-survey}. Here we summarize related work specifically focusing on relay capacity estimation.

Snader and Borisov proposed using \emph{opportunistic measurements}, where each relay measures the bandwidth of each other relay it communicates with as part of normal operation, and designed EigenSpeed~\cite{eigenspeed}, which combines these measurements using principal component analysis to derive a single relay capacity. EigenSpeed was designed to avoid certain types of collusion and misreporting attacks; however, Johnson et al.~\cite{peerflow} discovered that it is subject to a number of other attacks that allow colluding adversaries to inflate their bandwidth. They also designed PeerFlow, which is a more robust mechanism to combine opportunistic measurements from relays with provable limits on inflation attacks. These bounds, however, depend on having a fraction of bandwidth being on trusted nodes, and it has slow convergence properties due to its limitations on changing bandwidth values. 

FlashFlow~\cite{flashflow} is a new proposal to replace TorFlow. FlashFlow uses several servers that measure a relay simultaneously, generating a large network load intended to max out its capacity. FlashFlow has a guaranteed inflation bound of only 33\% but it is based on the assumption that a relay capacity is based on a hard limit that cannot be exceeded, as TorFlow uses traffic that is explicitly labeled for for bandwidth probing. 
In practice, it is often easier and cheaper to obtain high peak bandwidth capability than sustaining the same bandwidth continuously. 

At the same time, TorFlow probes, though not explicitly labeled, can also be identified by their distinct characteristics, and can be used to preferentially forward probe traffic to inflate bandwidth estimates, or to perform sophisticated denial-of-service attacks~\cite{jansen2019point}. A more stealthy approach for bandwidth measurement probing remains an open research question. The simplest and most effective attack on TorFlow, however, is to inflate the observed bandwidth published by the relay~\cite{peerflow}; this attack does not apply to \dualmethod as it does not use the observed bandwidth.
\section{Conclusion}

We have developed a new method for estimating the relay capacities in the Tor network, \dualmethod. We show that \ourmethod fails to accurately estimate relays capacities in under-loaded networks. Our detailed mathematical analysis showed that \dualmethod capacity estimates converge to their true value, while the estimate variance converges to 0, as the number of observations grows.  We validated the performance of \ourmethod with extensive simulations using a our custom flow-based simulator. Our results show that \dualmethod produces much more accurate estimates of relay capacities, which in turn results in much better load balancing of user traffic across the network, as compared with current methods.






\bibliographystyle{IEEEtranS}
\bibliography{refs,nikita}
%

\appendix
\label{sec:appendix}

\subsection{Proofs}
\label{sec:appendix_proofs}


\mledual*
\begin{proof}
As discussed, at each epoch, a relay can fall into the two cases discussed in section~\ref{sec:meas_mech}. We start with \textbf{Case 1}, where the second measurement is equal to double the first measurement of a relay.
When evaluating the objective function of the MLE, the observation random variable of the $j^{\mathit{th}}$ relay $\Obs_i[j]$ can be written as a function of $\Cq $, as if we are assuming $\Cq  = C^*[j]$, and the random variable $X_i[j]$ for $i \in [t]$:
\begin{equation}
    \Obs^2_i[j]=\frac{\Cq - X_i[j] C_{Client}^{avg}}{2}.
    \label{eq:Mal2}
\end{equation}
Recall that we assume that the random variable $X_i[j]$ follows a Poisson distribution with parameter $\lambda_sw_i[j]$ and all users leave at the end of each epoch. Hence, given  $w_i[j]$ for $j \in [n]$, the $\Obs_i[j]$'s at different iterations are independent random variables. Thus \cref{eq:Mal2} can be written as the product of the probability of the independent random variables $[\Obs_1[j],...,\Obs_t[j]]$:
\begin{equation}
\resizebox{0.5\textwidth}{!}{$ 
\begin{multlined}
    C_{t+1}^H[j] = \underset{\Cq  \in \capacityset}{\operatorname{argmax}} \hspace{0.1cm}\Pr_{\pathnumrand_{[t]}[j] \sim  \pois( \rate W_{[t]}[j])}(\Obs^2_{[t]}[j]=\obs^2_{[t]}[j]\hspace{0.1cm}|\nonumber
     W_{[t]}[j] = \weight_{[t]}[j])\\
    C^H_{t+1}[j] = \underset{\Cq  \in \capacityset}{\operatorname{argmax}} \hspace{0.1cm}\prod_{i=0}^{t}\Pr(\Obs^2_i[j]=\obs^2_i[j]\hspace{0.1cm}|\hspace{0.1cm}W_{i}[j]=w_i[j]).\\
    \label{eq:Mal3}
    \end{multlined}$}
\end{equation}
Rearranging \cref{eq:Mal2} results in:
\begin{equation}
    X_i[j]= \frac{\Cq - 2\Obs^2_i[j] }{C_{Client}^{avg}}.
    \label{Mal4}
\end{equation}
When the measurement is made and the observation is fixed, i.e. $\Obs^2_i[j]=\obs^2_i[j]$, the probability in \cref{eq:Mal3} can be expressed in terms of the random variable $X_i[j]$: $x_i[j]=\frac{\Cq - 2\obs^2_i[j] }{C_{Client}^{avg}}$.
\begin{equation}
    C^H_{t+1}[j] = \underset{\Cq  \in \capacityset}{\operatorname{argmax}} \hspace{0.1cm}\prod_{i=0}^{t}\Pr(X_i[j]=x_i[j]|\hspace{0.1cm}W_{i}[j]=w_i[j])
    \label{Mal5}
\end{equation}
Using the Poisson distribution probability mass function, we can write:
\begin{equation}
\begin{multlined}
    C^H_{t+1}[j] = \underset{\Cq  \in \capacityset}{\operatorname{argmax}} \hspace{0.1cm}\prod_{i=0}^{t} e^{-\lambda_sw_i[j]}\frac{1}{(x_i[j])!}(\lambda_sw_i[j])^{x_i[j]}\\
    = \underset{\Cq  \in \capacityset}{\operatorname{argmax}} \hspace{0.1cm}\prod_{i=0}^{t} e^{-\lambda_sw_i[j]}\frac{1}{\big(\frac{\Cq - 2\obs^2_i[j] }{C_{Client}^{avg}}\big)!}(\lambda_sw_i[j])^{\frac{\Cq - 2\obs^2_i[j] }{C_{Client}^{avg}}}\\
    \end{multlined}
    \label{eq:Mal6}
\end{equation}


Considering \textbf{Case 2}, the objective function of the MLE, the observation random variable of the $j^{\mathit{th}}$ relay $\Obs_i[j]$ can be written as a function of $\Cq $, as if we are assuming $\Cq  = C^*[j]$, and the random variable $X_i[j]$ for $i \in [t]$:
\begin{equation}
    \Obs^2_i[j]=\frac{\Cq }{X_i[j]+2}.
    \label{eq:Maln}
\end{equation}
Recall that we assume that the random variable $X_i[j]$ follows a Poisson distribution with parameter $\lambda_sw_i[j]$ and all users leave at the end of each epoch. Hence, given  $w_i[j]$ for $j \in [n]$, the $\Obs^2_i[j]$'s at different iterations are independent random variables. Thus \cref{eq:Maln} can be written as the product of the probability of the independent random variables $[\Obs^2_1[j],...,\Obs^2_t[j]]$:
\begin{equation}
\resizebox{0.5\textwidth}{!}{$ 
\begin{multlined}
    C_{t+1}^H[j] = \underset{\Cq  \in \capacityset}{\operatorname{argmax}} \hspace{0.1cm}\Pr_{\pathnumrand_{[t]}[j] \sim  \pois( \rate W_{[t]}[j])}(\Obs_{[t]}[j]=\obs_{[t]}[j]\hspace{0.1cm}|\nonumber
     W_{[t]}[j] = \weight_{[t]}[j])\\
    C^H_{t+1}[j] = \underset{\Cq  \in \capacityset}{\operatorname{argmax}} \hspace{0.1cm}\prod_{i=0}^{t}\Pr(\Obs_i[j]=\obs_i[j]\hspace{0.1cm}|\hspace{0.1cm}W_{i}[j]=w_i[j]).\\
    \label{eq:Maln3}
    \end{multlined}$}
\end{equation}
Rearranging \cref{eq:Maln} results in:
\begin{equation}
    X_i[j]= \frac{\Cq }{\Obs^2_i[j]}-2.
    \label{Maln4}
\end{equation}
When the measurement is made and the observation is fixed, i.e. $\Obs^2_i[j]=\obs_i[j]$, the probability in \cref{eq:Maln3} can be expressed in terms of the random variable $X_i[j]$: $X_i[j]=\frac{\Cq }{\obs^2_i[j]}-2$.
\begin{equation}
    C^H_{t+1}[j] = \underset{\Cq  \in \capacityset}{\operatorname{argmax}} \hspace{0.1cm}\prod_{i=0}^{t}\Pr(X_i[j]=x_i[j]|\hspace{0.1cm}W_{i}[j]=w_i[j])
    \label{Maln5}
\end{equation}
Using the Poisson distribution probability mass function, we can write:
\begin{equation}
\begin{multlined}
    C^H_{t+1}[j] = \underset{\Cq  \in \capacityset}{\operatorname{argmax}} \hspace{0.1cm}\prod_{i=0}^{t} e^{-\lambda_sw_i[j]}\frac{1}{(x_i[j])!}(\lambda_sw_i[j])^{x_i[j]}\\
    = \underset{\Cq  \in \capacityset}{\operatorname{argmax}} \hspace{0.1cm}\prod_{i=0}^{t} e^{-\lambda_sw_i[j]}\frac{1}{\big(\frac{\Cq }{\obs^2_i[j]}-2\big)!}(\lambda_sw_i[j])^{\frac{\Cq }{\obs^2_i[j]}-2}\\
    \end{multlined}
    \label{eq:Maln6}
\end{equation}

\end{proof}

\oneobsclosedform*
\begin{proof}
\underline{\textbf{Case 1:}} As discussed previously, we know that we are in this case if $\obs_i^1[j] = 2 \obs_i^2[j]$. We denote $C_{avg,i}$ the average bandwidth used by each client in the network during the $i^{th}$ epoch. Hence since the relay is not bottlenecked, we can say that $\obs_i^1[j]=C^*[j]-\pathnum_i[j]C_{avg,i}$; hence, given $\weight_i$, the measurement of the $j^{\mathit{th}}$ relay at the $i^{\mathit{th}}$ iteration is a random variable $\obsrand_i^1[j] = C^*[j]-\pathnumrand_i[j]C_{avg,i}$.

W will use superscript $D1$ to indicate that we are only using the most recent value of $\obs_t[j]$ and $\weight_t[j]$. Using maximum likelihood estimation, we have
\begin{small}
\begin{align}
\label{eq:mle_case1_onemeas_def}
&C_{t+1}^R[j] = \underset{\Cq \in \capacityset}{\operatorname{argmax}}\ f(\Cq, \obs_{t}[j], \weight_{t}[j]), \text{ where} \\
 \label{eq:Mal}
&f(\Cq, \obs_{t}[j], \weight_{t}[j])= \hspace{-5mm} \Pr_{\pathnumrand_{t}[j] \sim  \pois ( \rate w_{t}[j])} \left( \Cq-\pathnumrand_t[j]C_{avg,t} = \obs_i^1[j] \right),
\end{align}
\end{small}

\begin{equation*}
\resizebox{0.5\textwidth}{!}{$
\begin{multlined}
        C_{t+1}^{D1}[j]
        = \underset{\Cq \in \capacityset}{\operatorname{argmax}}\  \Pr_{\pathnumrand_{t}[j] \sim  \pois ( \rate w_{t}[j])} \left( \Cq-\pathnumrand_t[j]C_{avg,t} = \obs_t^1[j] \right)\\
        = \underset{\Cq \in \capacityset}{\operatorname{argmax}}\  \Pr_{\pathnumrand_{t}[j] \sim  \pois ( \rate w_{t}[j])} \left( \pathnumrand_t[j] =\frac{\Cq-\obs_t^1[j]}{C_{avg,t}} \right)\\
        = \underset{\Cq \in \capacityset}{\operatorname{argmax}}\ \log \left( \Pr_{\pathnumrand_{t}[j] \sim  \pois ( \rate w_{t}[j])} \left( \pathnumrand_t[j] =\frac{\Cq-\obs_t^1[j]}{C_{avg,t}} \right) \right)\\
        = \underset{\Cq \in \capacityset}{\operatorname{argmax}}\ \log \left( \exp \left( - \rate w_{t}[j] \right)\frac{1}{\left(\frac{\Cq-\obs_t^1[j]}{C_{avg,t}}\right)!}\left( \rate w_{t}[j]\right)^{\frac{\Cq-\obs_t^1[j]}{C_{avg,t}}} \right)\\
         = \underset{\Cq \in \capacityset}{\operatorname{argmax}}\ - \rate w_{t}[j] - \log \left(\left(\frac{\Cq-\obs_t^1[j]}{C_{avg,t}}\right)!\right)+\left( \frac{\Cq-\obs_t^1[j]}{C_{avg,t}}\right) \log \left(\rate w_{t}[j] \right) \\
    \end{multlined}$}
\end{equation*}
Using Stirling approximation for the second term, we get

\begin{equation*}
\resizebox{0.5\textwidth}{!}{$
\begin{multlined} 
        C_{t+1}^{D1}[j]
         = \underset{\Cq \in \capacityset}{\operatorname{argmax}}\ - \rate w_{t}[j] - \left(\frac{\Cq-\obs_t^1[j]}{C_{avg,t}}\right) \log \left(\frac{\Cq-\obs_t^1[j]}{C_{avg,t}}\right)\\+\left( \frac{\Cq-\obs_t^1[j]}{C_{avg,t}}\right)+\left( \frac{\Cq-\obs_t^1[j]}{C_{avg,t}}\right) \log \left(\rate w_{t}[j] \right) \\
    \end{multlined}$}
\end{equation*}

In order to find maximum, we differentiate the right side with respect to $\Cq$ and equate it to zero:

\begin{equation*}
\resizebox{0.5\textwidth}{!}{$
\begin{multlined}
-\frac{1}{C_{avg,t}}\log \left(\frac{\Cq-\obs_t^1[j]}{C_{avg,t}}\right)-  \left(\frac{\Cq-\obs_t^1[j]}{C_{avg,t}}\right)\frac{\frac{1}{C_{avg,t}}}{\left(\frac{\Cq-\obs_t^1[j]}{C_{avg,t}}\right)}+\frac{1}{C_{avg,t}}+ \frac{1}{C_{avg,t}} \log \left(\rate w_{t}[j] \right) = 0\\
-\frac{1}{C_{avg,t}}\log \left(\frac{\Cq-\obs_t^1[j]}{C_{avg,t}}\right)-  \frac{1}{C_{avg,t}}+\frac{1}{C_{avg,t}}+ \frac{1}{C_{avg,t}} \log \left(\rate w_{t}[j] \right) = 0\\
-\log \left(\frac{\Cq-\obs_t^1[j]}{C_{avg,t}}\right)+  \log \left(\rate w_{t}[j] \right) = 0\\
\log \left(\frac{C_{avg,t}\rate w_{t}[j]}{\Cq-\obs_t^1[j]} \right) = 0\\
\frac{C_{avg,t}\rate w_{t}[j]}{\Cq-\obs_t^1[j]}  = 1\\
    \end{multlined}$}
\end{equation*}
Thus $\Cq= \underbrace{\rate w_{t}[j]C_{avg,t}}_\text{expected bandwidth used by clients} + \underbrace{\obs_t^1[j]}_\text{bandwidth left unused}$\\

\underline{\textbf{Case 2:}} We know that we are in this case if $\obs_i^1[j] \neq 2 \obs_i^2[j]$.  Since the relay is bottlenecked when we add the second probe in all cases described previously, we can say that $\obs_i^2[j]=\frac{C^*[j]}{\pathnum_i[j]+2}$; hence, given $\weight_i$, the measurement of the $j^{\mathit{th}}$ relay at the $i^{\mathit{th}}$ iteration is a random variable $\obsrand_i^2[j] = \frac{C^*[j]}{\pathnumrand_i[j]+2}$.

Using maximum likelihood estimation, we have
\begin{small}
\begin{align}
\label{eq:mle_case1_onemeas_def}
&C_{t+1}^{D1}[j] = \underset{\Cq \in \capacityset}{\operatorname{argmax}}\ f(\Cq, \obs_{t}[j], \weight_{t}[j]), \text{ where} \\
 \label{eq:Mal}
&f(\Cq, \obs_{t}[j], \weight_{t}[j])= \hspace{-5mm} \Pr_{\pathnumrand_{t}[j] \sim  \pois ( \rate w_{t}[j])} \left( \frac{\Cq}{\pathnumrand_i[j]+2} = \obs_i^2[j] \right),
\end{align}
\end{small}

\begin{equation*}
\resizebox{0.5\textwidth}{!}{$
\begin{multlined}
        C_{t+1}^{D1}[j]
        = \underset{\Cq \in \capacityset}{\operatorname{argmax}}\  \Pr_{\pathnumrand_{t}[j] \sim  \pois ( \rate w_{t}[j])} \left( \frac{\Cq}{\pathnumrand_i[j]+2} = \obs_t^2[j] \right)\\
        = \underset{\Cq \in \capacityset}{\operatorname{argmax}}\  \Pr_{\pathnumrand_{t}[j] \sim  \pois ( \rate w_{t}[j])} \left( \pathnumrand_t[j] =\frac{\Cq}{\obs_t^2[j]}-2 \right)\\
        = \underset{\Cq \in \capacityset}{\operatorname{argmax}}\ \log \left( \Pr_{\pathnumrand_{t}[j] \sim  \pois ( \rate w_{t}[j])} \left( \pathnumrand_t[j] =\frac{\Cq}{\obs_t^2[j]}-2 \right) \right)\\
        = \underset{\Cq \in \capacityset}{\operatorname{argmax}}\ \log \left( \exp \left( - \rate w_{t}[j] \right)\frac{1}{\left(\frac{\Cq}{\obs_t^2[j]}-2\right)!}\left( \rate w_{t}[j]\right)^{\frac{\Cq}{\obs_t^2[j]}-2} \right)\\
         = \underset{\Cq \in \capacityset}{\operatorname{argmax}}\ - \rate w_{t}[j] - \log \left(\left(\frac{\Cq}{\obs_t^2[j]}-2\right)!\right)+\left( \frac{\Cq}{\obs_t^2[j]}-2\right) \log \left(\rate w_{t}[j] \right) \\
    \end{multlined}$}
\end{equation*}
Using Stirling approximation for the second term, we get

\begin{equation*}
\resizebox{0.5\textwidth}{!}{$
\begin{multlined}
        C_{t+1}^R[j]
         = \underset{\Cq \in \capacityset}{\operatorname{argmax}}\ - \rate w_{t}[j] - \left(\frac{\Cq}{\obs_t^2[j]}-2\right) \log \left(\frac{\Cq}{\obs_t^2[j]}-2\right)+\left( \frac{\Cq}{\obs_t^2[j]}-2\right)+\left( \frac{\Cq}{\obs_t^2[j]}-2\right) \log \left(\rate w_{t}[j] \right) \\
    \end{multlined}$}
\end{equation*}

In order to find the optimum, we differentiate the right side with respect to $\Cq$ and equate it to zero:

\begin{equation*}
\resizebox{0.5\textwidth}{!}{$
\begin{multlined}
-\frac{1}{\obs_t^2[j]}\log \left(\frac{\Cq}{\obs_t^2[j]}-2\right)-  \left(\frac{\Cq}{\obs_t^2[j]}-2\right)\frac{\frac{1}{\obs^2_t[j]}}{\left(\frac{\Cq}{\obs_t^2[j]}-2\right)}+\frac{1}{\obs_t^2[j]}+ \frac{1}{\obs_t^2[j]} \log \left(\rate w_{t}[j] \right) = 0\\
-\frac{1}{\obs_t^2[j]}\log \left(\frac{\Cq}{\obs_t^2[j]}-2\right)-  \frac{1}{\obs_t^2[j]}+\frac{1}{\obs_t^2[j]}+ \frac{1}{\obs_t^2[j]} \log \left(\rate w_{t}[j] \right) = 0\\
-\log \left(\frac{\Cq}{\obs_t^2[j]}-2\right)+  \log \left(\rate w_{t}[j] \right) = 0\\
\log \left(\frac{\rate w_{t}[j]\obs_t^2[j]}{\Cq-2\obs_t^2[j]} \right) = 0\\
\frac{\rate w_{t}[j]\obs_t^2[j]}{\Cq-2\obs_t^2[j]}  = 1\\
    \end{multlined}$}
\end{equation*}
Thus $\Cq= \obs_t^2[j]\left(\underbrace{\rate w_{t}[j]}_\text{expected number of users} + 2 \right)$\\
\end{proof}

\convguarantees*
\begin{proof}
We start by considering a relay that falls into \textbf{case 1} for its whole operation and hence is never bottlenecked. From Theorem~\ref{thm:mledual}, we can write:

\begin{equation*}
\resizebox{0.5\textwidth}{!}{$
\begin{multlined}
        C_{t+1}^D[j]
        = \underset{\Cq \in \capacityset}{\operatorname{argmax}}\  \prod_{i=0}^{t} \Pr_{\pathnumrand_{i}[j] \sim  \pois ( \rate w_{i}[j])} \left( \Cq-\pathnumrand_i[j]C_{avg} = \obs_i^1[j] \right)\\
        = \underset{\Cq \in \capacityset}{\operatorname{argmax}}\ \prod_{i=0}^{t} \Pr_{\pathnumrand_{i}[j] \sim  \pois ( \rate w_{i}[j])} \left( \pathnumrand_i[j] =\frac{\Cq-\obs_i^1[j]}{C_{avg}} \right)\\
        = \underset{\Cq \in \capacityset}{\operatorname{argmax}}\ \log \left(\prod_{i=0}^{t} \Pr_{\pathnumrand_{i}[j] \sim  \pois ( \rate w_{i}[j])} \left( \pathnumrand_i[j] =\frac{\Cq-\obs_i^1[j]}{C_{avg}} \right) \right)\\
        = \underset{\Cq \in \capacityset}{\operatorname{argmax}}\ \log \left( \prod_{i=0}^{t} \exp \left( - \rate w_{i}[j] \right)\frac{1}{\left(\frac{\Cq-\obs_i^1[j]}{C_{avg}}\right)!}\left( \rate w_{i}[j]\right)^{\frac{\Cq-\obs_i^1[j]}{C_{avg}}} \right)\\
         = \underset{\Cq \in \capacityset}{\operatorname{argmax}}\sum_{i=0}^{t} - \rate w_{i}[j] - \log \left(\left(\frac{\Cq-\obs_i^1[j]}{C_{avg}}\right)!\right)+\left( \frac{\Cq-\obs_i^1[j]}{C_{avg}}\right) \log \left(\rate w_{i}[j] \right) \\
         = \underset{\Cq \in \capacityset}{\operatorname{argmax}}\sum_{i=0}^t - \rate w_{i}[j] - \left(\frac{\Cq-\obs_i^1[j]}{C_{avg}}\right) \log \left(\frac{\Cq-\obs_i^1[j]}{C_{avg}}\right)+\left( \frac{\Cq-\obs_i^1[j]}{C_{avg}}\right)+\left( \frac{\Cq-\obs_i^1[j]}{C_{avg}}\right) \log \left(\rate w_{i}[j] \right) \\
    \end{multlined}$}
\end{equation*}

We differentiate the right side with respect to $\Cq$ and equate it to zero:

\begin{equation*}
\resizebox{0.5\textwidth}{!}{$
\begin{multlined}
\sum_{i=0}^t -\frac{1}{C_{avg}}\log \left(\frac{\Cq-\obs_i^1[j]}{C_{avg}}\right)-  \left(\frac{\Cq-\obs_i^1[j]}{C_{avg}}\right)\frac{\frac{1}{C_{avg}}}{\left(\frac{\Cq-\obs_i^1[j]}{C_{avg}}\right)}+\frac{1}{C_{avg}}+ \frac{1}{C_{avg}} \log \left(\rate w_{i}[j] \right) = 0\\
\sum_{i=0}^t-\frac{1}{C_{avg}}\log \left(\frac{\Cq-\obs_i^1[j]}{C_{avg}}\right)-  \frac{1}{C_{avg}}+\frac{1}{C_{avg}}+ \frac{1}{C_{avg}} \log \left(\rate w_{i}[j] \right) = 0\\
\sum_{i=0}^t -\log \left(\frac{\Cq-\obs_i^1[j]}{C_{avg}}\right)+  \log \left(\rate w_{i}[j] \right) = 0\\
    \end{multlined}$}
\end{equation*}

Using linearization to solve for the optimum $\Cq$, we have that

\begin{equation*}
\resizebox{0.5\textwidth}{!}{$
\begin{multlined}
f'_t(\Cq)=\sum_{i=0}^t -\log \left(\Cq-\obs_i^1[j]\right)+  \sum_{i=0}^t\log \left(\rate w_{i}[j]C_{avg} \right)\\ = f_{t-1}'(\Cq) +\log \left(\rate w_{t}[j]C_{avg} \right) - \log \left(\Cq-\obs_t^1[j]\right)\\ 
    \end{multlined}$}
\end{equation*}
We know that $f'_{t}(\Cq_{t})=0$ and $f'_{t-1}(\Cq_{t-1})=0$ thus

\begin{equation*}
\resizebox{0.5\textwidth}{!}{$
\begin{multlined}
f'_t(\Cq_{t-1})=f_{t-1}'(\Cq_{t-1}) +\log \left(\rate w_{t}[j]C_{avg} \right) - \log \left(\Cq_{t-1}-\obs_t^1[j]\right)\\
= \log \left(\rate w_{t}[j]C_{avg} \right) - \log \left(\Cq_{t-1}-\obs_t^1[j]\right)\\     \end{multlined}$}
\end{equation*}
We also have $f''_t(\Cq)= - \sum_{i=0}^t\frac{1}{\Cq-\obs_i^1[j]}= f''_{t-1}(\Cq)-\frac{1}{\Cq-\obs_t^1[j]}$.

By linearization we have $f'_t(\Cq_t)= f'_t(\Cq_{t-1})+f''_t(\Cq_{t-1})(\Cq_t-\Cq_{t-1})$ and thus,

\begin{equation*}
\begin{multlined}
\Cq_t=\frac{f'_t(\Cq_t)-f'_t(\Cq_{t-1})}{f''_t(\Cq_{t-1})}+\Cq_{t-1}
 \end{multlined}
\end{equation*}
Thus we can find an iterative solution of the optimization,

\begin{equation*}
\resizebox{0.5\textwidth}{!}{$
\begin{multlined}
\Cq_0=\obs_0^1[j]+\rate w_0[j]C_{avg} \hspace{3 mm} \text{and}\\
\Cq_t=\frac{\log \left(\rate w_{t}[j]C_{avg} \right) - \log \left(\Cq_{t-1}-\obs_t^1[j]\right)}{\sum_{i=0}^t\frac{1}{\Cq_{t-1}-\obs_i^1[j]}}+\Cq_{t-1}
\end{multlined}$}
\end{equation*}

Finding the steady state convergence of the above iterative formulation:

\begin{equation}
\begin{multlined}
\Cq=\frac{\log \left(\rate w_{t}[j]C_{avg} \right) - \log \left(\Cq-\obs_t^1[j]\right)}{\sum_{i=0}^t\frac{1}{\Cq-\obs_i^1[j]}}+\Cq\\
\log(\rate w_t[j]C_{avg})= \ log(\Cq-\obs_t^1[j])\\
\rate w_t[j]C_{avg}=\Cq - m_t^1[j] \\
\end{multlined}
\label{eq:steady}
\end{equation}

Hence the expected value of the estimate $E(\Cq_t)= C^*[j]$ since the right hand side of Equation~\ref{eq:steady} is equal to the actual capacity of the relay if it was never bottlenecked on average.

Now we consider \textbf{case 2}, where the relay is always bottlenecked for its whole measurement history. In this proof we will drop the upperscript $2$ from the measurement since we are only dealing with the second measurement of a relay.As we derived in \cref{eq:Maln6}, we know that for any $j \in [n]$, the weight at iteration $(t+1)$ should satisfy the following equation:
\begin{equation}
    C^D_{t+1}[j] = \underset{\Cq  \in \capacityset}{\operatorname{argmax}} \hspace{0.1cm}\prod_{i=0}^{t} e^{-\lambda_sw_i[j]}\frac{1}{\Big(\frac{\Cq }{\obs_i[j]}-2\Big)!}(\lambda_sw_i[j])^{\frac{\Cq }{\obs_i[j]}-2}
    \label{eq:Mal8}
\end{equation}
Since the logarithm function is a strictly increasing function, the maximum likelihood estimate of the capacity of a relay $j \in [n]$ using full history can be found:
\begin{equation}
\resizebox{0.5\textwidth}{!}{$
\begin{multlined}
    C^D_{t+1}[j] = \underset{\Cq  \in \capacityset}{\operatorname{argmax}} \hspace{0.1cm}\prod_{i=0}^{t} e^{-\lambda_sw_i[j]}\frac{1}{\Big(\frac{\Cq }{\obs_i[j]}-2\Big)!}(\lambda_sw_i[j])^{\frac{\Cq }{\obs_i[j]}-2}\\
    =\underset{\Cq  \in \capacityset}{\operatorname{argmax}} \hspace{0.1cm}\prod_{i=0}^{t} e^{-\lambda_sw_i[j]}\frac{(\frac{\Cq }{\obs_i[j]})^2}{\Big(\frac{\Cq }{\obs_i[j]}\Big)!}(\lambda_sw_i[j])^{\frac{\Cq }{\obs_i[j]}}(\lambda_sw_i[j])^{-2}\\
    =\underset{\Cq  \in \capacityset}{\operatorname{argmax}} \hspace{0.1cm}\sum_{i=0}^{t} -\lambda_sw_i[j]+2\log\Big(\frac{\Cq }{\obs_i[j]}\Big)-\log\bigg(\Big(\frac{\Cq }{\obs_i[j]}\Big)!\bigg)+\frac{\Cq }{\obs_i[j]}\log(\lambda_sw_i[j])-2\log(\lambda_sw_i[j])
\end{multlined}$}
    \label{eq:Mal9}
\end{equation}

Using Stirling's approximation, we have $\log(x!)\approx x\log(x)-x$. Thus substituting in \cref{eq:Mal9}:
\begin{equation}
\resizebox{0.5\textwidth}{!}{$
\begin{multlined}
    C^H_{t+1}[j] = \\
    \underset{\Cq  \in \capacityset}{\operatorname{argmax}} \hspace{0.1cm}\sum_{i=0}^{t} -\lambda_sw_i[j]+2\log\Big(\frac{\Cq }{\obs_i[j]}\Big)-\frac{\Cq }{\obs_i[j]}\log\Big(\frac{\Cq }{\obs_i[j]}\Big)+\frac{\Cq }{\obs_i[j]}+\frac{\Cq }{\obs_i[j]}\log(\lambda_sw_i[j])-2\log(\lambda_sw_i[j])
\end{multlined}$}
    \label{Mal10}
\end{equation}
Hence in order to find $C^H_{t+1}[j]$, we differentiate the right hand side of \cref{Mal10} with respect to $\Cq $, and find the value of $C^H_{t+1}[j]$ for which the derivative is zero.

\begin{equation}
\resizebox{0.5\textwidth}{!}{$
\begin{multlined}
   \sum_{i=0}^t\frac{2}{\obs_i[j]}\frac{1}{\frac{\Cq }{\obs_i[j]}}-\frac{1}{\obs_i[j]}\log\Big(\frac{\Cq }{\obs_i[j]}\Big)-\frac{1}{\obs_i[j]}+\frac{1}{\obs_i[j]}+\frac{1}{\obs_i[j]}\log(\lambda_sw_i[j])=0\\
   \sum_{i=0}^t\frac{2}{\Cq }-\frac{1}{\obs_i[j]}\log(\Cq )+\frac{1}{\obs_i[j]}\log(\obs_i[j])+\frac{1}{\obs_i[j]}\log(\lambda_sw_i[j])=0\\
   \sum_{i=0}^t\frac{2}{\Cq }-\frac{1}{\obs_i[j]}\log(\Cq )+\frac{1}{\obs_i[j]}\log(\obs_i[j]\lambda_sw_i[j])=0\\
    \sum_{i=0}^t\frac{1}{\obs_i[j]}\log(\Cq )=\sum_{i=0}^t\frac{2}{\Cq }+\frac{1}{\obs_i[j]}\log(\obs_i[j]\lambda_sw_i[j])\\
    \log(\Cq )\Big(\sum_{i=0}^t\frac{1}{\obs_i[j]}\Big)=\frac{2(t+1)}{\Cq }+\sum_{i=0}^t\frac{1}{\obs_i[j]}\log(\obs_i[j]\lambda_sw_i[j])\\
     \log(\Cq )=\frac{2(t+1)}{\Cq (\sum_{i=0}^t\frac{1}{\obs_i[j]})}+\frac{\sum_{i=0}^t\frac{1}{\obs_i[j]}\log(\obs_i[j]\lambda_sw_i[j])}{\sum_{i=0}^t\frac{1}{\obs_i[j]}}\\
     \log(\Cq )-\frac{2(t+1)}{\Cq (\sum_{i=0}^t\frac{1}{\obs_i[j]})}=\frac{\sum_{i=0}^t\frac{1}{\obs_i[j]}\log(\obs_i[j]\lambda_sw_i[j])}{\sum_{i=0}^t\frac{1}{\obs_i[j]}}\\
     \Cq e^{-\frac{2(t+1)}{\Cq (\sum_{i=0}^t\frac{1}{\obs_i[j]})}}=e^{\frac{\sum_{i=0}^t\frac{1}{\obs_i[j]}\log(\obs_i[j]\lambda_sw_i[j])}{\sum_{i=0}^t\frac{1}{\obs_i[j]}}}\\
     \frac{\sum_{i=0}^t\frac{1}{\obs_i[j]}}{2(t+1)}\Cq  e^{-\frac{2(t+1)}{\Cq (\sum_{i=0}^t\frac{1}{\obs_i[j]})}}=\frac{\sum_{i=0}^t\frac{1}{\obs_i[j]}}{2(t+1)}e^{\frac{\sum_{i=0}^t\frac{1}{\obs_i[j]}\log(\obs_i[j]\lambda_sw_i[j])}{\sum_{i=0}^t\frac{1}{\obs_i[j]}}}\\
\end{multlined}$}
    \label{eq:Mal11}
\end{equation}
Letting $z=\frac{2(t+1)}{\Cq (\sum_{i=0}^t\frac{1}{\obs_i[j]})}$ in \cref{eq:Mal11}, we have:
\begin{equation}
\begin{multlined}
     \frac{1}{z}e^{-z}=\frac{\sum_{i=0}^t\frac{1}{\obs_i[j]}}{2(t+1)}e^{\frac{\sum_{i=0}^t\frac{1}{\obs_i[j]}\log(\obs_i[j]\lambda_sw_i[j])}{\sum_{i=0}^t\frac{1}{\obs_i[j]}}}\\
     \frac{1}{ze^{z}}=\frac{\sum_{i=0}^t\frac{1}{\obs_i[j]}}{2(t+1)}e^{\frac{\sum_{i=0}^t\frac{1}{\obs_i[j]}\log(\obs_i[j]\lambda_sw_i[j])}{\sum_{i=0}^t\frac{1}{\obs_i[j]}}}\\
     ze^z=\frac{2(t+1)}{\sum_{i=0}^t\frac{1}{\obs_i[j]}}e^{-\frac{\sum_{i=0}^t\frac{1}{\obs_i[j]}\log(\obs_i[j]\lambda_sw_i[j])}{\sum_{i=0}^t\frac{1}{\obs_i[j]}}}\\
\end{multlined}
    \label{eq:Mal12}
\end{equation}
We know that the inverse image of the function $ze^z$ is the Lambert W function which has real solutions along its principal branch  for $z>-\frac{1}{e}$, denoted $W_0$. Thus we can solve for $z$:
\begin{equation}
\resizebox{0.5\textwidth}{!}{$
\begin{multlined}
    \frac{2(t+1)}{\Cq (\sum_{i=0}^t\frac{1}{\obs_i[j]})}=z=W_0\Bigg(\frac{2(t+1)}{\sum_{i=0}^t\frac{1}{\obs_i[j]}}e^{-\frac{\sum_{i=0}^t\frac{1}{\obs_i[j]}\log(\obs_i[j]\lambda_sw_i[j])}{\sum_{i=0}^t\frac{1}{\obs_i[j]}}}\Bigg)\\
\end{multlined}$}
\label{eq:Mal13}
\end{equation}

And hence solving for $\Cq $:
\begin{equation}
\resizebox{0.5\textwidth}{!}{$
\begin{multlined}
    C^D_{t+1}[j] = \Cq = \frac{2(t+1)}{\sum_{i=0}^t\frac{1}{\obs_i[j]}}
    \frac{1}{W_0\Bigg(\frac{2(t+1)}{\sum_{i=0}^t\frac{1}{\obs_i[j]}}e^{-\frac{\sum_{i=0}^t\frac{1}{\obs_i[j]}\log(\obs_i[j]\lambda_sw_i[j])}{\sum_{i=0}^t\frac{1}{\obs_i[j]}}}\Bigg)}\\
\end{multlined}$}
    \label{eq:Mal14}
\end{equation}

\begin{equation}
\resizebox{0.5\textwidth}{!}{$
\begin{multlined}
    C^D_{t+1}[j] = \frac{2(t+1)}{\sum_{i=0}^t\frac{1}{\obs_i[j]}}
    \frac{1}{W_0\Bigg(\frac{2(t+1)}{\sum_{i=0}^t\frac{1}{\obs_i[j]}}e^{-\frac{\sum_{i=0}^t\frac{1}{\obs_i[j]}\log(\obs_i[j]\rate w_i[j])}{\sum_{i=0}^t\frac{1}{\obs_i[j]}}}\Bigg)},\\
\end{multlined}$}
    \label{eq:Mal15}
\end{equation}
 where $W_0$ is the Lambert $W$ function along the principal branch.
The Lambert $W$  function is the multi-valued complex function $(ze^{z})^{-1}$ and $W_0$ is the unique-valued real function that takes the unique real value of $W$ when $z > \frac{-1}{e}$. Implementations of Lambert function exist in multiple software libraries~\footnote{\url{https://kite.com/python/docs/mpmath.lambertw}}.

 $W_0$ has the following Taylor series expansion for $z$ in the neighborhood of 0: $W_0(z)=z+o(z^2)$. Moreover, the argument of $W_0$ is small if the rate of users arrival to the network $\rate$ is large enough. Hence, the Taylor expansion around zero is valid and therefore:
  \begin{equation}
  \begin{multlined}
C^D_{t+1}[j] \approx e^{\frac{\sum_{i=0}^t\frac{1}{\obs_i[j]}\log(\obs_i[j]\rate \weight_i[j])}{\sum_{i=0}^t\frac{1}{\obs_i[j]}}}\\
= e^{\frac{\sum_{i=0}^t\frac{1}{\obs_i[j]}\log(\frac{\obs_i[j]\rate \weight_i[j]C^*[j]}{C^*[j]})}{\sum_{i=0}^t\frac{1}{\obs_i[j]}}}\\
= e^{\frac{\sum_{i=0}^t\frac{1}{\obs_i[j]}\log(\frac{\obs_i[j]\rate \weight_i[j]}{C^*[j]})+\frac{1}{\obs_i[j]}\log(C^*[j])}{\sum_{i=0}^t\frac{1}{\obs_i[j]}}}\\
= e^{\frac{\sum_{i=0}^t\frac{1}{\obs_i[j]}\log(\frac{\obs_i[j]\rate \weight_i[j]}{C^*[j]})}{\sum_{i=0}^t\frac{1}{\obs_i[j]}}}e^{\frac{\log(C^*[j])\sum_{i=0}^t\frac{1}{\obs_i[j]}}{\sum_{i=0}^t\frac{1}{\obs_i[j]}}}\\
= C^*[j]e^{\frac{\sum_{i=0}^t\frac{1}{\obs_i[j]}\log(\frac{\obs_i[j]\rate \weight_i[j]}{C^*[j]})}{\sum_{i=0}^t\frac{1}{\obs_i[j]}}}.
\end{multlined}
\end{equation}

We refer the reader to the Appendix of \cite{mleflowp} for a proof of the expected value of the closed form found above since this form exactly matches the form derived for $\ourmethodclosed$.

\end{proof}

\varconvergence*
\begin{proof}
We refer the reader to the Appendix of \cite{mleflowp} for a complete proof of the theorem.
\end{proof}

\end{document}